\documentclass[11pt]{article}
\usepackage[letterpaper, left=1in, top=1in, right=1in, bottom=1in,nohead,includefoot]{geometry}
\usepackage[utf8]{inputenc}
\usepackage[authoryear,round]{natbib}
\usepackage{adjustbox}
\usepackage{indentfirst}
\usepackage{mathrsfs}
\usepackage{amssymb}
\usepackage{amsmath}
\usepackage{bbm}
\usepackage{mathtools}
\usepackage{ascmac}
\usepackage{amsthm}
\usepackage{cmll}
\usepackage{stmaryrd}
\usepackage{caption}
\usepackage{lscape}
\usepackage{subcaption}
\usepackage{natbib}
\usepackage{setspace}
\usepackage{bm}
\usepackage{url}
\usepackage{booktabs}
\usepackage{color}
\usepackage{enumerate}
\usepackage{tabularx}

\usepackage{times}

\usepackage[dvipsnames]{xcolor}
	\usepackage[pdftex]{hyperref}
	\hypersetup{colorlinks,
	citecolor=NavyBlue  
	}

\usepackage{subfiles}

\usepackage[colorinlistoftodos,prependcaption,textsize=tiny]{todonotes}

\usepackage{mathrsfs}
\usepackage{amssymb}
\usepackage{amsmath}
\usepackage{mathtools}
\usepackage{ascmac}
\usepackage{amsthm}
\usepackage{multirow}
\usepackage{proba}
\usepackage{subcaption}
\usepackage{adjustbox}

\usepackage{graphicx}
\usepackage{natbib}
\usepackage{setspace}
\usepackage{bm}
\usepackage{url}

\usepackage{color}

\usepackage{times}

\newtheorem{theorem}{Theorem}
\newtheorem{lemma}{Lemma}
\newtheorem{corollary}{Corollary}
\newtheorem{proposition}{Proposition}
\newtheorem{example}{Example}

\newtheorem{assumption}{Assumption}
\newtheorem{remark}{Remark}

\begin{document}

\title{When Is Generalized Bayes Bayesian? \\A Decision-Theoretic Characterization of Loss-Based Updating}
\author{Kenichiro McAlinn\thanks{Department of Statistics, Operations, and Data Science, Fox School of Business, Temple
University, Philadelphia, PA. {\scriptsize{}Email:
kenichiro.mcalinn@temple.edu}}\, \& K\={o}saku Takanashi\thanks{RIKEN Center for Advanced Intelligence Project {\scriptsize{}Email: kosaku.takanashi@riken.jp} }}

\maketitle
\thispagestyle{empty}\setcounter{page}{0}
\begin{abstract}
Loss-based updating, including generalized Bayes, Gibbs, and quasi-posteriors, replaces
likelihoods by a user-chosen loss and produces a posterior-like distribution via exponential
tilt. We give a decision-theoretic characterization that separates
\emph{belief posteriors}-- conditional beliefs justified by the foundations of Savage and Anscombe--Aumann under a joint
probability model-- from \emph{decision posteriors}-- randomized decision rules justified by
preferences over decision rules.
We make explicit that a loss-based posterior coincides with ordinary Bayes if and only if the loss is, up to scale and a
data-only term, negative log-likelihood. We then show that generalized marginal likelihood is not evidence for decision
posteriors, and Bayes factors are not well-defined without additional structure.
In the decision posterior regime, non-degenerate posteriors require
nonlinear preferences over decision rules. Under sequential coherence and separability, these lead
to an entropy-penalized variational representation yielding generalized Bayes as the optimal rule.

\bigskip{}
{\it Keywords}: generalized Bayes, Gibbs posterior, loss-based updating, decision theory, dynamic consistency, marginal likelihood, model comparison, learning-rate calibration
\end{abstract}
\newpage{}

\section{Introduction}\label{sec:intro}

A large and growing literature updates a baseline distribution by exponentiating cumulative loss,
producing what is routinely called a ``generalized posterior.''  This loss-based updating viewpoint
includes general Bayesian inference and generalized Bayes \citep{BissiriHolmesWalker2016GeneralFramework},
Gibbs and quasi-posteriors, and related exponential-weights constructions.  It is widely used in robust
and misspecified inference \citep{BissiriHolmesWalker2016GeneralFramework,LyddonHolmesWalker2019LossLikelihoodBootstrap,
GrunwaldVanOmmen2017InconsistencyMisspecifiedLinearModels,WatsonHolmes2016ApproximateModelsRobustDecisions},
econometrics based on criterion functions \citep{chernozhukov2003mcmc}, learning-theoretic analyses
\citep{catoni2007pac}, decision synthesis \citep{tallman2024bayesian}, and inverse problems where a likelihood is unavailable or unreliable.
These methods are practically appealing, since the losses can be easier to elicit than sampling models, align directly with downstream goals,
and offer robustness when likelihood models are untenable.

However, calling the output a ``posterior'' invites an ambiguity that matters operationally.
Bayesian posteriors carry a specific coherence story. They are conditional beliefs derived from a joint probability model,
supporting familiar interpretations (betting odds, exchangeability, evidential use of marginal likelihoods, and Bayes factors).
Loss-based updates are often discussed using this same belief language even when no sampling model is specified. {Terminology in this area is not standardized, where loss-based/Gibbs updates are sometimes described using the language of belief updating or posterior credences, or “genuine posteriors,” even when the update is driven by a user-chosen loss}
Thus, we ask, which parts of the Bayesian coherence story survive once likelihood is replaced by loss?

This paper provides a decision-theoretic traffic control for loss-based updating by separating two objects that are often conflated.
A \emph{belief posterior} is a conditional belief about an uncertain state, derived from preferences over acts in the
Savage/Anscombe--Aumann framework \citep{Savage1954Foundations,AnscombeAumann1963DefinitionSubjectiveProbability}.
Under conditional choice axioms, a dynamic consistency requirement links unconditional and conditional preferences and
selects Bayesian conditioning as the dynamically consistent update \citep{Ghirardato2002RevisitingSavage}.
A \emph{decision posterior}, by contrast, is a randomized decision rule: a data-dependent distribution over actions
(or parameter indices) chosen to optimize a specified preference functional after observing data.
Both objects can be coherent-- but in different senses.
Belief posteriors are coherent as conditional beliefs; decision posteriors are coherent as optimal randomization devices.
Our central message is that loss-based updating is naturally aligned with decision posteriors and coincides with belief posteriors
only in special cases.  Making this explicit clarifies what ``priors'' mean in generalized Bayes, what form of dynamic consistency
is (and is not) being enforced, why a learning-rate parameter is unavoidable, and which Bayesian properties
(e.g., exchangeability as a belief constraint; likelihood-principle rhetoric; evidential use of marginal likelihoods) are not available
without additional assumptions.

This distinction has operational implications, and is most prevalent when we consider model comparison.
In the belief-posterior regime, the marginal likelihood is a prior predictive probability under a normalized sampling model
and supports Bayes-factor interpretations.
Outside that regime, there is no such canonical evidential object.
Here ``canonical'' means invariant to transformations of the loss that leave the posterior decision rule unchanged.
Loss-based posteriors are invariant to adding an arbitrary data-only term to the loss; however the Gibbs normalizing
constant rescales.  Bayes-factor-style comparison based on generalized marginal likelihood is not well-defined
without additional structure.

\paragraph{Contributions.}
We make four contributions aimed at separating what is genuinely Bayesian belief updating from what is decision-theoretic regularization.
First, we give a sharp divider: a loss-based posterior is an ordinary Bayesian \emph{belief posterior}
if and only if the loss is, up to a data-only term and scale, negative log-likelihood.
Second, we show that generalized marginal likelihood is not a \emph{canonical} evidential quantity outside the belief regime:
the posterior decision rule is invariant to data-only shifts of the loss, while the Gibbs normalizing constant rescales,
so Bayes-factor rhetoric requires additional identifying structure; we also give a shift-invariant alternative based on
proper predictive scoring rules.
Third, we formalize a degeneracy implication of von~Neumann--Morgenstern expected utility for mixtures of decision rules:
optimal randomization collapses to point-mass choices, implying that any non-degenerate generalized posterior necessarily relies on
nonlinear preferences over decision rules.
Fourth, we provide a preference-based foundation for generalized Bayes as a \emph{decision posterior}.
Building on variational representations from economics and information theory, we show how sequential coherence
together with separability across independent subproblems selects an entropy-based deviation cost,
yielding the generalized Bayes form as an optimal randomized decision rule; read generatively, this also provides a
simple design recipe for constructing loss-based updates and diagnosing when KL-regularized generalized Bayes is not canonical.

Beyond these results, we provide a ``coherence book'' itemizing which claims are valid for belief posteriors and which fail
for decision posteriors.  In particular, calibration of the learning rate $\eta$ moves
from a modeling detail (fixed by likelihood scale) to a structural component of principled analysis,
clarifying why calibration procedures such as the loss-likelihood bootstrap \citep{LyddonHolmesWalker2019LossLikelihoodBootstrap}
and safe learning-rate selection \citep{GrunwaldVanOmmen2017InconsistencyMisspecifiedLinearModels} are requirements once one leaves the belief-posterior regime.

\paragraph{Outline.}
Section~\ref{sec:setup} formalizes belief versus decision posteriors and introduces primitives for loss-based updating.
Section~\ref{sec:characterize} gives the Bayesian equivalence divider and the preference characterization leading to generalized Bayes.
Section~\ref{sec:ml} develops the model-comparison consequence.
Section~\ref{sec:vnm} records the von~Neumann--Morgenstern degeneracy implication.
Section~\ref{sec:book} presents the coherence book and discusses practical implications.

\section{Belief posteriors and decision posteriors}\label{sec:setup}

This section fixes primitives and terminology.  Our goal is not to re-axiomatize Bayesian decision
theory, but to make explicit a distinction that becomes essential once one replaces likelihood with
loss: whether the ``posterior'' object is (i) a conditional belief state about uncertain outcomes,
or (ii) a randomized decision rule chosen after observing data.

\subsection{Belief posteriors: conditional beliefs from preferences over uncertain acts}\label{subsec:belief}

In the Savage/Anscombe--Aumann tradition, the primitive is a preference relation over (possibly
randomized) acts under uncertainty \citep{Savage1954Foundations,AnscombeAumann1963DefinitionSubjectiveProbability}.
An act maps an uncertain state to a consequence, and preferences are required to be
compatible with conditional choice.  Under standard axioms, there exist a utility function over
consequences and a subjective probability over states such that acts are ranked by subjective
expected utility.  Dynamic consistency links unconditional and conditional rankings: plans chosen ex
ante must agree with choices made ex post when the relevant information event is learned.  In
particular, Bayesian conditioning arises as the unique dynamically consistent update \citep{Ghirardato2002RevisitingSavage}.

In statistical applications, belief posteriors arise when one posits a joint probabilistic model for
data and parameters.  A prior distribution represents beliefs about the parameter, a likelihood represents beliefs about observable consequences given that
state, and Bayesian conditioning produces a posterior distribution.  This posterior is a belief
object: it is interpretable as a coherent conditional belief state.

\subsection{Decision posteriors: post-data randomized decision rules}\label{subsec:decision}

Loss-based updating is naturally expressed in a different primitive.  Fix an action (or parameter)
space $\Theta$; we use ``parameter'' only as an index for a family of candidate actions/predictors.
After observing data $x$, the decision maker chooses a randomized decision rule
$q(\cdot| x)\in\Delta(\Theta)$, where $\Delta(\Theta)$ denotes probability measures on $\Theta$.
The key point is conceptual: $q(\cdot| x)$ is a choice variable, not necessarily a belief
about an underlying data-generating state.

To evaluate randomized rules, we assume the analyst specifies a loss function $\ell(\theta;x)$ that
scores the action $\theta$ on the realized data $x$.  For repeated or sequential data
$x_{1:n}=(x_1,\ldots,x_n)$, we write the cumulative loss as
\[
L_n(\theta) = \sum_{i=1}^n \ell(\theta;x_i),
\]
and treat $L_n$ as the data-dependent performance criterion.  A baseline
distribution $\pi$ on $\Theta$ serves as a reference rule, encoding default behavior, complexity
regularization, or pre-data commitments; crucially, it need not be a prior belief about outcomes.

A \emph{decision posterior} is then any rule $q(\cdot| x_{1:n})$ that is selected by optimizing a
preference functional defined on $\Delta(\Theta)$.  The generalized Bayes family corresponds to the
Gibbs form
\begin{equation}\label{eq:gibbs}
q(\theta| x_{1:n}) \;\propto\; \pi(\theta)\exp\{-\eta L_n(\theta)\},
\end{equation}
for some learning rate $\eta>0$.  This includes ordinary Bayes as a special case
when the loss is proportional to negative log-likelihood, but in general the interpretation is
different: \eqref{eq:gibbs} is a randomized decision rule derived from a loss and a baseline.

\subsection{Loss-based updating as an operator and sequential coherence}\label{subsec:operator}

It is useful to view loss-based updating as an operator that maps primitives to a rule on
$\Delta(\Theta)$.  Let $\Psi$ be an updating map such that, for data $x_{1:n}$,
\[
q(\cdot| x_{1:n}) = \Psi\!\left(\pi,\;L_n(\cdot)\right).
\]
A central desideratum is sequential (batching) coherence:
updating with $x_1$ and then $x_2$ should agree with updating once using the combined loss:
\[
\Psi\!\left(\Psi(\pi, L_m),\; L_{m+1:n}\right) = \Psi(\pi, L_n).
\]
The Gibbs form \eqref{eq:gibbs} satisfies this by construction \citep{BissiriHolmesWalker2016GeneralFramework}.

A crucial difference from belief posteriors is scale.  In ordinary Bayes, the scale of
negative log-likelihood is fixed by the probability model, and the corresponding ``temperature'' is
one.  In loss-based updating, $\eta$ is not pinned down by probabilistic semantics; it
is part of the decision specification.  This is why calibration of $\eta$ is structural rather than
cosmetic.

\subsection{Standing assumptions and scope}\label{subsec:assumptions}

Throughout, $\Theta$ is a measurable space and $\pi$ is a probability measure on $\Theta$.  The loss
$\ell(\theta;x)$ is measurable in $\theta$ for each $x$ and such that $L_n(\theta)$ is well-defined
for the settings of interest.  Our results concern the interpretation and decision-theoretic
status of distributions produced by loss-based updates: when they can be interpreted as belief
posteriors, when they must be interpreted as decision posteriors, and what coherence properties are
retained or lost under each interpretation.

\section{When is generalized Bayes Bayesian?}\label{sec:characterize}

This section gives two characterizations that organize the remainder of the paper.
First, we formalize when a loss-based posterior can be reinterpreted as an ordinary Bayesian
belief posterior under some likelihood model.  Second, we give a decision-theoretic
foundation for generalized Bayes as a decision posterior: an optimal randomized rule under
sequential coherence and a separability invariance across independent subproblems.

\subsection{Generalized Bayes coincides with Bayes if and only if the loss is log-likelihood}\label{subsec:iff}

Fix a baseline distribution $\pi$ on $\Theta$ and a loss $\ell(\theta;x)$.  For a single
observation $x$ (or for $x$ denoting a dataset), write the generalized Bayes update as
\begin{equation}\label{eq:gb-update}
q(\theta| x)\;\propto\;\pi(\theta)\exp\{-\eta \ell(\theta;x)\},
\end{equation}
with learning rate $\eta>0$.  We ask: when can $q(\cdot| x)$ be interpreted as a Bayesian
posterior arising from a joint model and thus inherit belief-posterior semantics?

\begin{theorem}[Bayesian equivalence]\label{thm:bayes-iff}
Fix $\eta>0$, a baseline distribution $\pi$ on $\Theta$, and a measurable loss $\ell(\theta;x)$.
The following are equivalent:
\begin{enumerate}[(i)]
\item There exists a likelihood family $\{p_\theta(x)\}_{\theta\in\Theta}$ such that for all $x$,
the generalized update \eqref{eq:gb-update} equals the ordinary Bayesian posterior,
$q(\theta| x)\propto \pi(\theta)p_\theta(x)$.
\item There exist a likelihood family $\{p_\theta(x)\}_{\theta\in\Theta}$ and a function $c(x)$ not
depending on $\theta$, such that, for all $\theta$ and $x$,
\begin{equation}\label{eq:loss-loglik-affine}
\ell(\theta;x) = -\frac{1}{\eta}\log p_\theta(x) \;+\; c(x).
\end{equation}
\end{enumerate}
\end{theorem}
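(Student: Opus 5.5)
The plan is to prove the two implications separately. Throughout, interpret ``$q(\theta\mid x)\propto \pi(\theta)p_\theta(x)$'' as equality of densities with respect to $\pi$, up to a positive normalizing constant depending only on $x$, for $\pi$-almost every $\theta$. I also assume that both normalizers are finite and positive:
\[
Z_\eta(x)=\int \exp\{-\eta\ell(\theta;x)\}\,\pi(d\theta), \qquad m(x)=\int p_\theta(x)\,\pi(d\theta).
\]
These standing conditions are needed for either side of the equivalence to define a probability measure.

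\emph{(ii)$\Rightarrow$(i).} This direction is a direct computation. Substituting \eqref{eq:loss-loglik-affine} gives
\[
\exp\{-\eta\ell(\theta;x)\}=p_\theta(x)\,e^{-\eta c(x)}.
\]
Hence $\pi(\theta)\exp\{-\eta\ell(\theta;x)\}=e^{-\eta c(x)}\pi(\theta)p_\theta(x)$. The factor $e^{-\eta c(x)}$ does not depend on $\theta$, so it cancels on normalization. Equivalently, $Z_\eta(x)=e^{-\eta c(x)}m(x)$, and the two normalized distributions coincide for every $x$.

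\emph{(i)$\Rightarrow$(ii).} Here I equate the two normalized densities. For each $x$, (i) says
\[
\frac{e^{-\eta\ell(\theta;x)}}{Z_\eta(x)}=\frac{p_\theta(x)}{m(x)} \quad \text{for } \pi\text{-a.e. } \theta.
\]
Taking logarithms and rearranging gives
\[
\ell(\theta;x)=-\tfrac1\eta\log p_\theta(x)+c(x), \qquad c(x):=\tfrac1\eta\bigl(\log m(x)-\log Z_\eta(x)\bigr),
\]
and $c(x)$ depends only on $x$. To reach the ``for all $\theta$'' form of (ii), I intend to exploit the freedom in choosing the likelihood family. Define directly
\[
\tilde p_\theta(x):=\exp\{-\eta\ell(\theta;x)\}\,\frac{m(x)}{Z_\eta(x)}.
\]
This family agrees with $p_\theta$ $\pi$-almost everywhere, so it induces the same posterior. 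It also satisfies \eqref{eq:loss-loglik-affine} identically in $\theta$ with the $c(x)$ above. The family $\tilde p_\theta$ inherits nonnegativity and measurability from $\ell$. If ``likelihood family'' is required to mean normalized densities in $x$, then $\tilde p_\theta$ integrates to one, because it equals $p_\theta$ off a $\pi$-null set of $\theta$; on that null set one can simply redefine it, since a $\pi$-null set cannot affect the posterior.

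The main obstacle is the null-set and normalization bookkeeping. There are two points.
\begin{enumerate}[(a)]
\item A posterior determines the likelihood only up to $\pi$-null sets in $\theta$ and an $x$-dependent factor, so ``for all $\theta$'' in (ii) must be obtained by redefining $p_\theta$, as above.
\item Logarithms require $p_\theta(x)>0$ wherever the loss is finite. The exponential form forces $q(\cdot\mid x)$ to be equivalent to $\pi$, so (i) already yields $p_\theta(x)>0$ for $\pi$-a.e. $\theta$ whenever $m(x)>0$.
\end{enumerate}
For $x$ with $m(x)=0$ or $m(x)=\infty$, the Bayesian posterior in (i) is undefined. I would state that both sides are only required to hold on the set of $x$ where the update is well-defined.

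Finally, I will remark that the $x$-dependent normalization is exactly why $c(x)$ appears and cannot in general be removed. This sets up the non-identifiability of the Gibbs normalizer used later in Section~\ref{sec:ml}.
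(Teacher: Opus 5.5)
Your core argument is the paper's. For (ii)$\Rightarrow$(i) you substitute and cancel the $\theta$-free factor $e^{-\eta c(x)}$. For (i)$\Rightarrow$(ii) you match the two kernels up to an $x$-only factor and take logarithms. Your $c(x)=\tfrac1\eta\log\{m(x)/Z_\eta(x)\}$ has the right sign; the paper's $c(x)=-(1/\eta)\log a(x)$ should read $+(1/\eta)\log a(x)$. The paper reads ``$\propto$'' as a pointwise identity in $\theta$ and divides by $\pi(\theta)$. That gives $p_\theta(x)=a(x)e^{-\eta\ell(\theta;x)}$ for every $\theta$ with the original normalized family, so the paper never confronts null sets.

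The proposal goes wrong in its repair of the null-set issue, which you correctly identify as the main obstacle.

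\emph{Fubini.} The claim that $\tilde p_\theta$ equals $p_\theta$ off a $\pi$-null set of $\theta$ holds only for each fixed $x$, with an exceptional set $N_x$ that depends on $x$. To get $\int \tilde p_\theta\,d\mu=1$ (with respect to a dominating measure $\mu$), you need joint measurability and Fubini. Even then you only get it for $\theta$ outside some $\pi$-null set $N$.

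\emph{Redefinition.} This is the decisive problem. Redefining $\tilde p_\theta$ for $\theta\in N$ to restore normalization destroys \eqref{eq:loss-loglik-affine} at exactly those $\theta$, which undoes the reason you introduced $\tilde p$. It cannot be patched: with normalized likelihoods, the ``for all $\theta$'' form of (ii) is false under your reading of (i). Take $\Theta=\{\theta_0,\theta_1\}$, $\pi=\delta_{\theta_0}$, a strictly positive density $f$, and a loss with $e^{-\eta\ell(\theta_0;x)}=f(x)$ and $e^{-\eta\ell(\theta_1;x)}=2f(x)$. Both updates equal $\delta_{\theta_0}$, so (i) holds with $p_\theta\equiv f$. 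Yet a normalized family satisfying (ii) at both points would need $\int f e^{\eta c}\,d\mu=1$ and $\int 2f e^{\eta c}\,d\mu=1$ simultaneously. The same example also defeats the paper's pointwise division by $\pi(\theta)$, which tacitly assumes $\pi(\theta)>0$.

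The correct conclusion is one of two things.
\begin{enumerate}[(a)]
\item (ii) holds for $\pi$-a.e.\ $\theta$: use your $\tilde p$ on $\Theta\setminus N$ and the original $p_\theta$ on $N$.
\item Under the paper's implicit assumption that $\pi$ has an everywhere-positive density and ``$\propto$'' holds pointwise, (ii) holds for all $\theta$ with the original $p_\theta$ and no redefinition at all.
\end{enumerate}
Dropping normalization is not an escape. Then $p_\theta=e^{-\eta\ell}$ with $c\equiv 0$ satisfies (ii) for every loss, and the theorem becomes vacuous.
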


Although variants of Theorem~\ref{thm:bayes-iff} appear in the literature \citep[see, e.g.,][]{BissiriHolmesWalker2016GeneralFramework,catoni2007pac,GrunwaldVanOmmen2017InconsistencyMisspecifiedLinearModels}, the above formulation provides a sharp divider in our context.  A generalized posterior is a belief
posterior-- it can be interpreted as Bayesian conditioning under some joint model-- if and
only if the loss is, up to an irrelevant data-only term and scale, negative log-likelihood.  In all
other cases, \eqref{eq:gb-update} is not (and cannot be made) Bayesian conditioning in the
Savage/Anscombe--Aumann sense: it must be interpreted as a decision posterior. Proof is in Supplementary Material~\ref{app:bayes-iff}.

\begin{example}[Criterion-function loss breaks belief semantics]
    For $y_{1:n}$ and a scale parameter $\sigma>0$, consider the loss
$L_n(\sigma)=\sum_{i=1}^n y_i^2/(2\sigma^2)$, yielding
$q(\sigma| y)\propto \pi(\sigma)\exp\{-\eta L_n(\sigma)\}$.
Under the Gaussian sampling model, the negative log-likelihood differs from $L_n(\sigma)$ by the
parameter-dependent term $n\log\sigma$, so Theorem~\ref{thm:bayes-iff} implies this update is not a
belief posterior. It must be interpreted as a decision posterior.
\end{example}

So, why not ``normalize $\exp\{-\eta\ell\}$?''
A natural objection is that for any loss one can define a formal family
$\tilde p_\theta(x)\propto \exp\{-\eta \ell(\theta;x)\}$ by normalizing over $x$.  Theorem~\ref{thm:bayes-iff}
shows why this does \emph{not} generally yield a belief-posterior interpretation: the normalizing
factor needed to make $\tilde p_\theta$ integrate to one typically depends on $\theta$ and
therefore cannot be absorbed into $c(x)$ in \eqref{eq:loss-loglik-affine}.  Being ``Gibbs in $\theta$'' is automatic, but being a likelihood in $x$ with a
$\theta$-free additive constant is a substantive restriction.

\begin{corollary}[$x$-normalization diagnostic for Bayesianity]\label{cor:bayes-diagnostic}
Assume the sample space $\mathcal X$ carries a dominating measure $\mu$.
Fix $\eta>0$ and a measurable loss $\ell(\theta;x)$.  Define
\[
A(\theta)=\int_{\mathcal X}\exp\{-\eta \ell(\theta;x)\} \mu(dx)\in(0,\infty].
\]
If there exists a likelihood family $\{p_\theta\}$, such that
$q(\theta| x)\propto \pi(\theta)p_\theta(x)$ agrees with \eqref{eq:gb-update} for all $x$, then
$A(\theta)$ must be finite and independent of $\theta$.
Conversely, if $A(\theta)\equiv A\in(0,\infty)$, then
$p_\theta(x)=\exp\{-\eta \ell(\theta;x)\}/A$
defines a valid likelihood and \eqref{eq:gb-update} is an ordinary Bayesian posterior.
\end{corollary}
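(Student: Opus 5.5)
The converse direction is a direct verification, and I would do it first. If $A(\theta)\equiv A\in(0,\infty)$, then $p_\theta(x)=\exp\{-\eta\ell(\theta;x)\}/A$ is nonnegative and measurable, and $\int p_\theta\,d\mu=1$ for every $\theta$. So $\{p_\theta\}$ is a valid likelihood family with respect to $\mu$. Moreover
\[
\ell(\theta;x)=-\tfrac1\eta\log p_\theta(x)-\tfrac1\eta\log A,
\]
which is exactly \eqref{eq:loss-loglik-affine} with the constant data-only term $c(x)\equiv-\eta^{-1}\log A$. Theorem~\ref{thm:bayes-iff} (ii)$\Rightarrow$(i) then gives that \eqref{eq:gb-update} is the ordinary posterior. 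One can also see this directly: the factor $1/A$ cancels in the normalization over $\theta$. Finally, I would check that the normalizer $\int\pi(\theta)p_\theta(x)\,d\theta$ is positive and finite for $\mu$-a.e.\ $x$, which holds whenever the update \eqref{eq:gb-update} is itself well defined.

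For the forward direction I would invoke Theorem~\ref{thm:bayes-iff} (i)$\Rightarrow$(ii). This yields a likelihood family and a data-only function $c$ with
\[
\exp\{-\eta\ell(\theta;x)\}=p_\theta(x)\,e^{-\eta c(x)}.
\]
Integrating over $x$ gives
\[
A(\theta)=\int p_\theta(x)\,e^{-\eta c(x)}\,\mu(dx).
\]
The key step, and the one I expect to be the real obstacle, is the role of $c$. If $c$ is constant, then $A(\theta)=e^{-\eta c}$ is finite and $\theta$-free, as claimed. If $c$ genuinely varies with $x$, however, $A(\theta)$ is the expectation of $e^{-\eta c(X)}$ under $p_\theta$. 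That is generally $\theta$-dependent and possibly infinite, so the literal claim can fail: take $\ell=-\eta^{-1}\log p_\theta+c(x)$ with a non-constant $c$.

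I would resolve this the way the surrounding discussion suggests, by treating the diagnostic as a statement modulo data-only shifts, which leave the posterior \eqref{eq:gb-update} unchanged. Concretely, the forward implication should be proved in one of two equivalent forms. In the first, the loss is first put in its representative with $c\equiv0$ (equivalently, a constant). In the second, $\mu$ is replaced by the dominating measure $\mu_c(dx)=e^{-\eta c(x)}\mu(dx)$, for which $A\equiv1$. I would state explicitly that the corollary holds in this sense, then note that the converse needs no such qualification.
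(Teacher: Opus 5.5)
Your converse is correct, and you are right that the forward implication cannot be proved as stated, because it is false. The paper gives no separate proof of this corollary; it is presented as making Theorem~\ref{thm:bayes-iff} explicit. That implicit route through (i)$\Rightarrow$(ii) fails exactly where you locate it: $c(x)$ need not be constant.

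You should still exhibit a concrete counterexample rather than ``take a non-constant $c$''. Non-constancy alone does not suffice: if $c$ depends on $x$ only through an ancillary statistic, then $A(\theta)=E_{p_\theta}[e^{-\eta c(X)}]$ is $\theta$-free. Two examples that work:
\begin{itemize}
\item Take $\eta=1$, $\mu$ Lebesgue and $p_\theta=N(\theta,1)$. The loss $\ell(\theta;x)=\theta^2/2-\theta x$ equals $-\log p_\theta(x)-x^2/2-\tfrac12\log(2\pi)$, so its update is exactly the Gaussian-location posterior for every $x$. Yet $A(\theta)=\int e^{\theta x-\theta^2/2}\,dx=\infty$ for every $\theta$.
\item Take $\mathcal X=\{0,1\}$ with counting measure, $p_0=(3/4,1/4)$, $p_1=(1/4,3/4)$ and $e^{-\eta c(x)}=1+x$. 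Then $A$ is finite but $A(0)=5/4\neq 7/4=A(1)$.
\end{itemize}

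Your proposed repair has two problems.
\begin{itemize}
\item The reweighted measure has the wrong sign. Since $e^{-\eta\ell}=p_\theta e^{-\eta c}$, integrating against $e^{-\eta c}\mu$ gives $\int p_\theta e^{-2\eta c}\,d\mu$, not $1$. The measure you want is $e^{+\eta c(x)}\mu(dx)$.
\item ``Pass to the $c\equiv 0$ representative'' is not a diagnostic, because choosing that representative presupposes knowing $p_\theta$.
\end{itemize}

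The statement that is actually true is \emph{existential}. (i) holds if and only if there is a measurable $h:\mathcal X\to(0,\infty)$ such that $\int e^{-\eta\ell(\theta;x)}h(x)\,\mu(dx)$ is finite and does not depend on $\theta$. This can only be asserted up to $\pi$-null sets of $\theta$, which (i) cannot see. The forward direction follows from Theorem~\ref{thm:bayes-iff} with $h=e^{\eta c}$. The converse is your verification applied to the measure $h\,d\mu$.

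With the given $\mu$ held fixed, the $x$-normalization test is therefore only sufficient. A finite constant $A$ certifies a belief posterior, but a $\theta$-dependent or infinite $A$ does not rule one out. This contradicts the ``only when'' reading in the text following the corollary.
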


Corollary~\ref{cor:bayes-diagnostic} makes explicit the substantive content of
Theorem~\ref{thm:bayes-iff}: generalized Bayes is a belief posterior only when the ``partition
function'' required to normalize $\exp\{-\eta\ell(\theta;x)\}$ as a density in $x$ does \emph{not}
depend on $\theta$.

\subsection{From preferences over decision rules to generalized Bayes}\label{subsec:variational}

We now turn to a positive characterization: when is the generalized Bayes form \eqref{eq:gb-update}
the optimal randomized decision rule derived from an axiomatic preference specification?
All proofs are in Supplementary Material~\ref{app:variational}.

After observing data $x$, the decision maker chooses a randomized rule $q\in\Delta(\Theta)$.  The
data enter through the loss $\ell(\theta;x)$, and the baseline $\pi$ represents a default rule or
regularizer.  We seek preference conditions under which the optimal $q$ must take the generalized
Bayes form.

A natural class of decision criteria on $\Delta(\Theta)$ is variational: the decision maker
trades off expected loss under $q$ against a penalty for deviating from $\pi$,
\begin{equation}\label{eq:variational-form}
q^{*}(\cdot| x)\;\in\;\arg\min_{q\in\Delta(\Theta)}
\left\{
\int \ell(\theta;x)\, q(d\theta) \;+\; \frac{1}{\eta} D(q\|\pi)
\right\},
\end{equation}
where $D(\cdot\|\cdot)$ is a deviation cost and $\eta>0$ calibrates its strength.
This connects to entropy-penalized choice in economics, including variational
and multiplier preferences \citep{HansenSargent2001,MaccheroniMarinacciRustichini2006,Strzalecki2011} and information-cost
foundations for stochastic choice \citep{MatejkaMcKay2015}.

\paragraph{Axioms (variational decision posteriors).}
Fix $x$ and let $\succeq_x$ denote conditional preferences over $q\in\Delta(\Theta)$.
We impose the following structure:
\begin{enumerate}[({A}1)]
\item \emph{Weak order and regularity:} $\succeq_x$ is complete and transitive and is continuous
under mixtures.
\item \emph{Loss monotonicity:} if $\int \ell(\theta;x)\, q(d\theta)\le \int \ell(\theta;x)\, r(d\theta)$,
then $q\succeq_x r$, with strict preference when the inequality is strict and all else is equal.
\item \emph{Sequential (batching) coherence:} for independent data blocks $x$ and $x'$ whose losses
add, preferences induced after observing $(x,x')$ coincide with those induced by updating sequentially.
\item \emph{Separability across independent subproblems:} if the action space and baseline factor as
$\Theta=\Theta_1\times\Theta_2$ and $\pi=\pi_1\otimes\pi_2$, and if losses add
$\ell((\theta_1,\theta_2);x)=\ell_1(\theta_1;x)+\ell_2(\theta_2;x)$, then the induced optimal rule
factors: $q^{*} = q_1^{*}\otimes q_2^{*}$, and the preference ranking of product rules depends only on
their component rankings.
\end{enumerate}

The separability axiom is the key ``no interaction across independent subproblems'' requirement:
it rules out deviation penalties that introduce spurious coupling when the baseline and losses do
not.  This is the decision-theoretic analogue of product/recursivity conditions that underlie
entropy-based penalties in robust choice \citep{Strzalecki2011,MatejkaMcKay2015}.

\begin{assumption}[Variational representation]\label{ass:variational}
For each $x$, there exist $\eta>0$ and a functional $D(\cdot\|\pi):\Delta(\Theta)\to[0,\infty]$, such
that preferences admit the variational form \eqref{eq:variational-form}.  We further assume
$D(\pi\|\pi)=0$ and that $D(\cdot\|\pi)$ is strictly convex and lower semicontinuous in $q$.
\end{assumption}

\begin{assumption}[$f$-divergence deviation cost]\label{ass:fdiv}
For each baseline $\pi$, the deviation cost admits the separable form
$D(q\|\pi)=\int_{\Theta}\varphi(dq/d\pi)\,d\pi$
whenever $q\ll \pi$, and $D(q\|\pi)=+\infty$ otherwise, where $\varphi:(0,\infty)\to\mathbb R$ is
convex with $\varphi(1)=0$.  We assume strict convexity in $q$.
\end{assumption}

Assumption~\ref{ass:fdiv} covers the most widely used penalties in decision theory and statistics
(including relative entropy), while remaining broad enough that the identification result below is nontrivial.

\begin{proposition}[Separability implies product additivity]\label{prop:prodadd}
Assume Axiom (A4) and Assumption~\ref{ass:variational}.
Then, for product spaces $\Theta=\Theta_1\times\Theta_2$ and product baselines
$\pi=\pi_1\otimes\pi_2$, separability implies the product-additivity restriction
\[
D(q_1\otimes q_2\|\pi_1\otimes\pi_2)
=
D(q_1\|\pi_1)+D(q_2\|\pi_2),
\quad\text{for all } q_1\in\Delta(\Theta_1),\ q_2\in\Delta(\Theta_2).
\]
\end{proposition}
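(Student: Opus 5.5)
The plan is to read (A4) as a statement about the whole preference functional on product rules, not only about its optimizer, and then use the variational form \eqref{eq:variational-form} to identify the deviation cost. Fix product spaces and a product baseline $\pi=\pi_1\otimes\pi_2$. For any additive loss $\ell=\ell_1+\ell_2$ and product rule $q_1\otimes q_2$, Fubini gives
\[
\int \ell\, d(q_1\otimes q_2)=\int \ell_1\,dq_1+\int \ell_2\,dq_2 .
\]
So the expected-loss part of the criterion is already additive. The only possible source of coupling is $D(q_1\otimes q_2\|\pi_1\otimes\pi_2)$. Define the defect
\[
\Delta(q_1,q_2)=D(q_1\otimes q_2\|\pi)-D(q_1\|\pi_1)-D(q_2\|\pi_2),
\]
and aim to show $\Delta\equiv 0$.

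First I will use the normalization $D(\pi\|\pi)=0$. Applying (A4) with a zero loss on the first factor, the first component of the optimal rule is $\pi_1$. Separability then says the ranking of rules $\pi_1\otimes q_2$ coincides with the ranking of $q_2$ under the component criterion $\int\ell_2\,dq_2+\eta^{-1}D(q_2\|\pi_2)$. The same holds symmetrically in the other factor.

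The key identification step comes next. Since both rankings are represented by functionals of the same variational type, with the same loss term, I will argue they agree up to an additive constant. The argument is to vary $\ell_2$ over all bounded measurable losses. The criterion is affine in the loss, and the map $\ell_2\mapsto\min_{q_2}\{\cdots\}$ is the convex conjugate of $\eta^{-1}D(\cdot\|\pi_2)$. Agreement of the optimal choices and of the rankings for every $\ell_2$ then forces the conjugates to agree up to a constant. Lower semicontinuity and strict convexity (Assumption~\ref{ass:variational}), together with Fenchel--Moreau, give
\[
D(\pi_1\otimes q_2\|\pi)=D(q_2\|\pi_2)+\text{const},
\]
and evaluating at $q_2=\pi_2$ kills the constant. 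Hence $\Delta(\pi_1,q_2)=\Delta(q_1,\pi_2)=0$.

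To reach general $(q_1,q_2)$ I use the clause in (A4) that the ranking of product rules depends only on the component rankings, jointly, for all additive losses. Take a pair $\ell_1,\ell_2$ whose unique optima are $q_1,q_2$; these exist by strict convexity and by conjugate duality, at least for $q_i$ in the effective domain. By (A4) the unique joint optimum is $q_1\otimes q_2$. Comparing the joint optimal value, computed as the conjugate of $D(\cdot\|\pi)$ at $\ell_1+\ell_2$, with the sum of component optimal values identifies the joint conjugate on additive losses as the sum of the component conjugates. Restricted to product measures, this yields $\Delta(q_1,q_2)=0$. Outside the effective domain both sides equal $+\infty$, which I will check using the factor cases above.

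I expect the conjugate-duality step to be the main obstacle. The difficulty is that agreement of rankings only determines a representation up to a strictly increasing transformation, not just an additive constant. I would handle this first by working with the variational representation itself: both sides are affine in the loss with coefficient one, which pins the transformation to a translation. If that proves delicate, I would fall back on treating separability as the direct requirement that the joint minimal value equals the sum of the component minimal values, which is the natural reading of ``no interaction''.
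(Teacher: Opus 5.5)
Your first stage is sound and worth keeping. Reading the component rankings in (A4) as the variational rankings of the subproblems $(\Theta_i,\pi_i,\ell_i)$, you derive $D(\pi_1\otimes q_2\|\pi)=D(q_2\|\pi_2)$ and $D(q_1\otimes\pi_2\|\pi)=D(q_1\|\pi_1)$. The paper simply asserts these ("interpreting $D$ consistently across domains").

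The gap is in the passage to general $(q_1,q_2)$. Suppose $(\ell_1,\ell_2)$ has optimizer $q_1\otimes q_2$. Then the joint optimal value is $\int\ell_1\,dq_1+\int\ell_2\,dq_2+\eta^{-1}D(q_1\otimes q_2\|\pi)$. The sum of the component values is the same expression with $D(q_1\|\pi_1)+D(q_2\|\pi_2)$ in place of the joint cost. "Comparing" the two identifies the conjugates only if you already know the values are equal, which is exactly $\Delta(q_1,q_2)=0$. You give no argument for that equality. Your fallback, declaring value additivity to be the meaning of separability, assumes the conclusion.

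Value additivity could in fact be extracted from the factorization clause by an envelope argument. Along $t\mapsto t(\ell_1+\ell_2)$ both value functions are concave, share the supergradient $\int\ell_1\,dq_{1,t}+\int\ell_2\,dq_{2,t}$ at every $t$, and vanish at $t=0$. Even then you only reach pairs that are unique optimizers for some loss. Strict convexity gives uniqueness of an optimizer, not existence of a loss exposing a prescribed $q_i$. For KL with bounded losses, the exposed points are exactly the $q$ with $\log(dq/d\pi)$ bounded, a proper subset of the effective domain. A separate extension argument would still be needed, and none is given.

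The repair is already in your first stage, because the ranking clause of (A4) applies to all product rules, not only those whose first factor is $\pi_1$. For $q_1\otimes q_2$ versus $q_1\otimes r_2$ the first components tie, so the joint ranking must coincide with the component ranking of $q_2$ versus $r_2$ for every $\ell_2$. Both criteria are affine in $\ell_2$ with unit coefficient. When $q_2\neq r_2$, taking $\ell_2=t h$ with $h$ bounded and $\int h\,d(q_2-r_2)\neq 0$ makes the expected-loss difference sweep all of $\mathbb{R}$. Equal signs for every $t$ then force
\[
D(q_1\otimes q_2\|\pi)-D(q_1\otimes r_2\|\pi)=D(q_2\|\pi_2)-D(r_2\|\pi_2),
\]
at least where the costs are finite; the $+\infty$ cases need separate bookkeeping in any version of this argument. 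Setting $r_2=\pi_2$ and using $D(q_1\otimes\pi_2\|\pi)=D(q_1\|\pi_1)$ gives product additivity directly, with no conjugates, exposedness or Fenchel--Moreau. The same elementary sign-sweeping argument is also the clean way to do your first stage, and it resolves your worry about monotone transformations.

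This repaired argument is the paper's route. The paper shows the incremental cost $D(q_1\otimes q_2\|\pi)-D(r_1\otimes q_2\|\pi)$ is independent of $q_2$. Using $D(\pi\|\pi)=0$, it obtains $D(q_1\otimes q_2\|\pi)=D(q_1\otimes\pi_2\|\pi)+D(\pi_1\otimes q_2\|\pi)$. It then invokes the marginal identities that your first stage actually proves.
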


\begin{proposition}[KL is the unique $f$-divergence with product additivity]\label{prop:klunique}
Assume Assumption~\ref{ass:fdiv}.  If $D$ satisfies the product-additivity property of
Proposition~\ref{prop:prodadd} for all product baselines and product measures, then
$D(q\|\pi)=c\,D_{\mathrm{KL}}(q\|\pi)$ for $c>0$,
where $c$ can be absorbed into $\eta$ in \eqref{eq:variational-form}.
\end{proposition}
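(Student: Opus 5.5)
The strategy is to turn product additivity into a functional equation for $\varphi$ and solve it. Take density ratios $u=dq_1/d\pi_1$ and $v=dq_2/d\pi_2$. Then $d(q_1\otimes q_2)/d(\pi_1\otimes\pi_2)=u(\theta_1)v(\theta_2)$, and additivity under Assumption~\ref{ass:fdiv} reads
\[
\int\!\!\int \varphi(uv)\,d\pi_1\,d\pi_2=\int\varphi(u)\,d\pi_1+\int\varphi(v)\,d\pi_2 .
\]
To extract pointwise information I would use the simplest test spaces: two-point spaces $\Theta_1=\Theta_2=\{0,1\}$ with baselines $\pi_1=(\alpha,1-\alpha)$ and $\pi_2=(\beta,1-\beta)$. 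On a two-point space the ratio takes values $a$ and $b$, with $\alpha a+(1-\alpha)b=1$. This gives a family of identities in finitely many values of $\varphi$.

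The key steps, in order, are as follows.

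\textbf{Step 1 (regularity).} Convexity of $\varphi$ on $(0,\infty)$ gives continuity, and one-sided derivatives exist everywhere. I would first normalize by replacing $\varphi(t)$ with $\varphi(t)-\kappa(t-1)$. This changes nothing, since $\int (dq/d\pi-1)\,d\pi=0$, so we may fix the slope convenient for the computation.

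\textbf{Step 2 (functional equation).} Let the perturbations be small, say $a=1+s(1-\alpha)/\alpha$, $b=1-s$, and similarly with $t$ for the second factor. Expanding the two-point identity in these parameters, and exploiting the freedom in $\alpha,\beta$, should yield a Pexider-type equation. Define $g(t)=\varphi(t)/t$ or work with $h(t)=\varphi(t)-t\varphi'(1)$ directly. The cleanest route I would try is to set $\psi(x)=\varphi(x)-x\varphi(1)$ and show it satisfies
\[
\varphi(xy)=x\varphi(y)+y\varphi(x)\quad(\text{mod affine terms}),
\]
since that equation characterizes $x\log x$. Concretely, I would eliminate all but the cross terms by subtracting the cases where one factor is trivial ($v\equiv1$), and then vary $\alpha,\beta$ to separate variables.

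\textbf{Step 3 (solving).} Once the equation $\varphi(xy)=x\varphi(y)+y\varphi(x)$ holds (after the affine correction), put $f(x)=\varphi(x)/x$. Then $f(xy)=f(x)+f(y)$, so by continuity (from convexity) $f(x)=c\log x$. Hence $\varphi(t)=c\,t\log t$ up to a term $\kappa(t-1)$ that integrates to zero. Strict convexity forces $c>0$, giving $D=c\,D_{\mathrm{KL}}$, and $c$ is absorbed into $\eta$ in \eqref{eq:variational-form}.

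The main obstacle is Step 2: deriving the clean multiplicative Cauchy-type equation from the integrated two-point identities. The constraints $\alpha a+(1-\alpha)b=1$ couple the values, so the equation is not immediately pointwise. My first attempt would be a second-order cross-derivative argument. Differentiate the two-point identity in $s$ and $t$ at general points, using differentiability almost everywhere (or mollify, since convex $\varphi$ is twice differentiable a.e.). The cross-partial of the left side yields $\varphi''(uv)\,uv+\varphi'(uv)$-type terms that must vanish, because the right side has no cross terms. That gives the ODE $(x\varphi'(x))'=\text{const}\cdot(\ldots)$, which integrates to $\varphi''(x)=c/x$, that is, $\varphi(x)=c\,x\log x+$ affine.

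If differentiability is an issue, I would fall back on Step 3 with continuity alone, working with distributional second derivatives of the convex $\varphi$.
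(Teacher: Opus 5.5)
Your Step~2 is where the argument breaks, and it cannot be repaired, because the proposition is false as stated. Take $\varphi(t)=-\log t$. It is strictly convex on $(0,\infty)$ with $\varphi(1)=0$, and the induced cost $D(q\|\pi)=\int -\log(dq/d\pi)\,d\pi=D_{\mathrm{KL}}(\pi\|q)$ is reverse KL. Since $-\log(A(\theta_1)B(\theta_2))=-\log A(\theta_1)-\log B(\theta_2)$ and $\pi_1,\pi_2$ are probability measures, product additivity holds exactly for all product baselines and product measures, including the $+\infty$ cases. Reverse KL is also strictly convex and lower semicontinuous in $q$. Yet it is not a multiple of $D_{\mathrm{KL}}(q\|\pi)$: on $\{0,1\}$ with uniform $\pi$, the ratio of reverse to forward KL is about $1.10$ at $q=(3/4,1/4)$ and about $1.39$ at $q=(9/10,1/10)$. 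Consistently with this, no affine correction $\kappa(t-1)$ makes $-\log t$ satisfy $\varphi(xy)=x\varphi(y)+y\varphi(x)$. So that equation is not a consequence of product additivity, and Step~3 never gets its input. The paper's own proof has the same hole. Its ``standard functional-equation argument'' asserts that the convex solutions are $c\,t\log t+a(t-1)$, which omits this branch. The omission matters downstream: with $\varphi=-\log t$ on a finite $\Theta$, the optimal rule is $q^*(\theta)=\pi(\theta)/\{\eta(\ell(\theta;x)+\lambda)\}$ with $\lambda$ fixed by normalization, which is not the Gibbs form.

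Your cross-derivative idea, carried out correctly, finds the true solution set. Set $a_0=1+s(1-\alpha)/\alpha$, $a_1=1-s$, $b_0=1+t(1-\beta)/\beta$, $b_1=1-t$ and $\psi(x)=(x\varphi'(x))'$. Applying $\partial_s\partial_t$ to the two-point identity (for smooth $\varphi$; distributionally in general) gives $(1-\alpha)(1-\beta)\,[\psi(a_0b_0)-\psi(a_0b_1)-\psi(a_1b_0)+\psi(a_1b_1)]=0$ for all $a_0>1>a_1>0$ and $b_0>1>b_1>0$. It is this rectangle difference that vanishes, not $\psi$ itself, and it forces $\psi(x)=c_1\log x+c_0$. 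Integrating once gives $x\varphi'(x)=c_1x\log x+(c_0-c_1)x-c_2$. The integration constant $c_2$, which your sketch drops, produces $\varphi''(x)=c_1/x+c_2/x^2$, i.e.\ $\varphi(t)=c_1t\log t-c_2\log t+a(t-1)$ with $c_1,c_2\ge 0$ by convexity. So product additivity yields only $D=c_1D_{\mathrm{KL}}(q\|\pi)+c_2D_{\mathrm{KL}}(\pi\|q)$, and strict convexity only rules out $c_1=c_2=0$. To reach the stated conclusion you need an extra hypothesis that kills $c_2$, for instance $\lim_{t\downarrow0}\varphi(t)<\infty$ (equivalently, $D(q\|\pi)<\infty$ for some $q$ vanishing on a set of positive $\pi$-measure). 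Then $c_2=0$, strict convexity gives $c_1>0$, and the $a(t-1)$ term integrates to zero as in your Step~1.
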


Proposition~\ref{prop:klunique} is an identification result within the broad family of $f$-divergences: product additivity singles out relative entropy (up to scale).

\begin{corollary}[Generalized Bayes as the unique optimal decision posterior]\label{cor:gb}
Assume Axioms (A1)--(A4) and Assumptions~\ref{ass:variational}--\ref{ass:fdiv}.  Then, the unique
optimizer of \eqref{eq:variational-form} satisfies the generalized Bayes form \eqref{eq:gb-update}.
\end{corollary}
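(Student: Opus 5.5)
The plan is to chain the two propositions and then solve the resulting KL-penalized variational problem in closed form via the Gibbs variational principle. By Axiom (A4) and Assumption~\ref{ass:variational}, Proposition~\ref{prop:prodadd} gives product additivity of $D$ over product baselines and product rules. Assumption~\ref{ass:fdiv} places $D$ among $f$-divergences, so Proposition~\ref{prop:klunique} yields $D(q\|\pi)=c\,D_{\mathrm{KL}}(q\|\pi)$ for some $c>0$. Absorbing $c$ into the learning rate by setting $\eta'=\eta/c$, the criterion \eqref{eq:variational-form} becomes
\[
J(q)=\int \ell(\theta;x)\,q(d\theta)+\frac{1}{\eta'}D_{\mathrm{KL}}(q\|\pi).
\]
Axioms (A1)--(A3) are used only to ensure that this representation is well posed and consistent across data. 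In particular, (A3) pins down the same $\eta$ across batches, which is what makes the additive loss $L_n$ enter with a single learning rate.

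Next I would solve $\min_q J(q)$ directly. I impose the standing integrability condition
\[
Z(x)=\int \exp\{-\eta'\ell(\theta;x)\}\,\pi(d\theta)\in(0,\infty),
\]
which is implicit in \eqref{eq:gb-update} being well defined. Define the Gibbs measure $q^\star(d\theta)=\pi(d\theta)\exp\{-\eta'\ell(\theta;x)\}/Z(x)$. For any $q\ll\pi$ with finite objective, the chain rule $\log(dq/d\pi)=\log(dq/dq^\star)+\log(dq^\star/d\pi)$ gives the identity
\[
J(q)=-\frac{1}{\eta'}\log Z(x)+\frac{1}{\eta'}D_{\mathrm{KL}}(q\|q^\star).
\]
If $q\not\ll\pi$, then $J(q)=+\infty$ by Assumption~\ref{ass:fdiv}, so such $q$ are never optimal. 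Since $D_{\mathrm{KL}}(q\|q^\star)\ge 0$ with equality iff $q=q^\star$, the rule $q^\star$ is a minimizer and the unique one. This also agrees with the strict convexity in Assumption~\ref{ass:variational}.

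The main obstacle is the measure-theoretic bookkeeping in that identity. Two points need care. First, when $\int \ell\,dq$ is infinite or $-\infty$ while the KL term is finite, the splitting can produce an indeterminate $\infty-\infty$. I would handle this by restricting to $q$ with $\int|\ell|\,dq<\infty$, or by assuming $\ell$ is bounded below, and arguing that every other $q$ gives value $+\infty$ or is excluded. Second, I would verify that $q^\star$ itself has finite objective, which follows from $Z\in(0,\infty)$ together with the lower bound on $\ell$.

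Finally, I would note that $q^\star(\theta\mid x)\propto\pi(\theta)\exp\{-\eta'\ell(\theta;x)\}$ is exactly \eqref{eq:gb-update} with learning rate $\eta'$. Applied to cumulative data it gives \eqref{eq:gibbs}, and the batching property from (A3) is then automatic.
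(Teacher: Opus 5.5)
Your proposal is correct and follows the paper's own proof: you combine Propositions~\ref{prop:prodadd} and~\ref{prop:klunique} to get $D=c\,D_{\mathrm{KL}}$ with $c>0$, absorb $c$ into the learning rate, and then use the identity $J(q)=-\frac{1}{\eta'}\log Z(x)+\frac{1}{\eta'}D_{\mathrm{KL}}(q\|q^\star)$ to obtain the Gibbs rule as the unique minimizer, which is exactly the argument of Lemma~\ref{lem:gibbs-solves-kl}. Your added bookkeeping, requiring $Z(x)\in(0,\infty)$ and $\ell$ bounded below to rule out $\infty-\infty$, makes explicit integrability conditions that the paper's lemma leaves implicit.
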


The corollary provides a preference-based justification for generalized Bayes as a decision
posterior.  The baseline $\pi$ is interpreted as a reference or default rule and the entropy
penalty as the cost of deviating from that default in a manner that respects separability across
independent subproblems.  The scale $\eta$ becomes a feature of the preference specification rather
than an artifact of a likelihood model, which is why its calibration is central in applications.

Theorem~\ref{thm:bayes-iff} and Corollary~\ref{cor:gb} deliver the ``full picture.''
Theorem~\ref{thm:bayes-iff} identifies when generalized Bayes is Bayesian in the classical sense of
belief updating.  Outside this case,
the appropriate interpretation is Corollary~\ref{cor:gb}: generalized Bayes arises as an
optimal randomized decision rule under a variational preference specification.  Under
our assumptions, the deviation penalty is (up to scale)
relative entropy, yielding the generalized Bayes (Gibbs) rule.  

\subsection{A design recipe for loss-based updating}\label{sec:recipe}

The results above can be read \emph{generatively}: they specify how to construct an update rule once the analyst chooses
the operational notion of fit and the coherence requirements the update should satisfy.

\paragraph{Step 1: Specify the target and loss.}
Choose an action/parameter space $\Theta$ and a loss $\ell(\theta;x)$ encoding the inferential target and the unit of fit.
This choice fixes what is being optimized (risk) and, therefore, what the resulting ``posterior'' weights are meant to accomplish.

\paragraph{Step 2: Specify the composition structure}
Decide whether the problem admits a defensible decomposition into independent subproblems and whether one demands additivity of
both loss and regularization across such components (Axiom~A4).  This is not a technicality. It determines whether the deviation
cost should compose additively.

\paragraph{Step 3: Choose an update family and calibrate temperature}
Within the broad class of entropy-regularized preference representations (Assumption~1), Axiom~A4 identifies KL (Proposition~\ref{prop:klunique}),
yielding the Gibbs/generalized Bayes update.  Outside that regime, KL-regularized generalized Bayes is a modeling choice rather
than a canonical consequence, and the learning rate/temperature $\eta$ must be specified by an explicit calibration rule
(e.g., safe-Bayes, loss-likelihood bootstrap, or held-out predictive matching; Section~\ref{sec:book}).

\paragraph{Diagnostic (when KL is not justified).}
If there is no credible independence structure under which the objective decomposes additively-- for example, in the presence of
global constraints, ranking losses, dependent time series, network interactions, or other couplings-- then Axiom~A4 is not a reasonable
requirement and the KL penalty is not singled out by coherence.  In such cases, the appropriate update depends on the intended
composition rule, and alternative regularizers may be warranted.

In the belief-posterior regime, posterior uncertainty reflects uncertainty about an underlying sampling model
$p(\cdot|\theta)$ together with a prior $\pi(\theta)$, and loss enters only at the decision stage.
Loss-based updating relocates part of this epistemic burden: absent a joint sampling model, the update is driven by
(i) the choice of loss $\ell(\theta;x)$ encoding the operative notion of fit, (ii) the composition/separability structure
invoked to justify a particular deviation penalty (e.g., KL under Axiom~A4), and (iii) the calibration of the learning
rate/temperature $\eta$, whose scale is fixed by the likelihood in belief Bayes but must be chosen explicitly here.
Thus, generalized Bayes should not be seen as ``Bayes without a model,'' rather as ``a model of how you choose under an explicitly specified loss,'' and therefore requires transparency about loss choice and calibration.

\section{Marginal likelihood is not canonical evidence for decision posteriors}\label{sec:ml}

A persistent source of confusion concerns marginal
likelihood and Bayes factor rhetoric.  Under a belief-posterior interpretation, the marginal
likelihood is a well-defined probability of the observed data under a joint model and 
admits a standard evidential interpretation.
For decision posteriors, the quantity that plays the role of a
``normalizing constant'' in a Gibbs posterior is not identified by the update, and
Bayes factor-style evidence is not well-defined without additional structure.

\subsection{Belief evidence versus decision evidence}\label{subsec:ml-belief-vs-decision}

\paragraph{Belief posteriors.}
If a posterior arises from a joint model $p_\theta(x)$ and a prior $\pi(\theta)$, then the marginal
likelihood $m(x)=\int p_\theta(x) \pi(\theta) d\theta$
is the predictive probability of the observed data under the model, and Bayes factors compare evidential support in the standard sense.

\paragraph{Decision posteriors}
For a generalized Bayes (Gibbs) posterior based on loss $\ell(\theta;x)$,
\begin{equation}\label{eq:gibbs-posterior}
q(\theta| x)\;\propto\;\pi(\theta)\exp\{-\eta \ell(\theta;x)\},
\end{equation}
the normalizing constant $Z(x)=\int \exp\{-\eta \ell(\theta;x)\} \pi(d\theta)$
is sometimes used in a Bayes factor-like manner.  However, unless \eqref{eq:gibbs-posterior} is a belief posterior,
$Z(x)$ does not have the belief-evidence interpretation and is not an
object associated with the decision posterior itself.

\paragraph{Canonicality criterion.}
Two losses $\ell$ and $\tilde \ell$ that differ by a data-only term, $\tilde \ell(\theta;x)=\ell(\theta;x)+c(x)$,
induce the same decision posterior via \eqref{eq:gibbs} for any fixed $(\eta,\pi)$.
Any quantity proposed as evidence associated with the decision posterior
should be invariant to such loss-equivalent transformations; we call such an evidential quantity
canonical for the decision posterior.

\subsection{Non-identifiability of generalized marginal likelihood}\label{subsec:ml-nonid}

Generalized Bayes posteriors are invariant to data-only shifts of the
loss, whereas the normalizing constant is not.

\begin{proposition}[Normalizing constants are not identified by a decision posterior]\label{prop:Z-not-identified}
Fix $\eta>0$ and a baseline distribution $\pi$ on $\Theta$.  Let $\ell(\theta;x)$ be a loss and
define the generalized posterior $q(\theta| x)$ by \eqref{eq:gibbs-posterior}, with normalizing
constant $Z(x)$.  For any measurable function $c(x)$, define a shifted loss
$\ell_c(\theta;x)=\ell(\theta;x)+c(x)$.
Let $q_c(\theta| x)$ and $Z_c(x)$ denote the corresponding posterior and normalizing constant.
Then, for all $x$,
\[
q_c(\theta| x)=q(\theta| x),
\quad\text{but}\quad
Z_c(x)=e^{-\eta c(x)} Z(x).
\]
The posterior mapping $x\mapsto q(\cdot| x)$ does not identify $x\mapsto Z(x)$.
\end{proposition}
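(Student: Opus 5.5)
The argument is a direct computation that factors the data-only term out of the Gibbs integrand. I will fix $x$ and substitute $\ell_c(\theta;x)=\ell(\theta;x)+c(x)$ into the normalizing constant:
\[
Z_c(x)=\int \exp\{-\eta \ell(\theta;x)-\eta c(x)\}\,\pi(d\theta).
\]
Since $c(x)$ does not depend on $\theta$, the factor $e^{-\eta c(x)}$ is a finite positive constant in the $\theta$-integral and pulls out, giving $Z_c(x)=e^{-\eta c(x)}Z(x)$. This identity holds in $[0,\infty]$ without further conditions.

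For the posterior, I will assume (as implicit in defining $q$) that $0<Z(x)<\infty$. Then $0<Z_c(x)<\infty$ as well, and I write
\[
q_c(\theta| x)=\frac{\pi(\theta)e^{-\eta \ell(\theta;x)}e^{-\eta c(x)}}{e^{-\eta c(x)}Z(x)}=q(\theta| x).
\]
The $e^{-\eta c(x)}$ factors cancel, so $q_c=q$. Stated measure-theoretically, the two posteriors have the same density with respect to $\pi$, which is the statement to be shown. This is exactly the data-only invariance already noted in the canonicality criterion preceding the proposition.

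For non-identification, I will exhibit two loss specifications that produce the same posterior mapping but different normalizing-constant mappings. I take any $c$ that is not identically zero, for instance $c(x)\equiv 1$. Then $x\mapsto q_c(\cdot| x)$ coincides with $x\mapsto q(\cdot| x)$, yet $Z_c(x)=e^{-\eta}Z(x)\neq Z(x)$ for every $x$ with $Z(x)\in(0,\infty)$. Hence no functional of the posterior mapping alone can recover $Z$.

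The only possible obstacle is the well-definedness caveat. The posterior requires $Z(x)\in(0,\infty)$, and measurability of $c$ ensures $\ell_c$ is a measurable loss. Both conditions are preserved under the shift, so I expect no real difficulty.
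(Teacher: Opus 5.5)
Your proposal is correct and follows the same route as the paper's one-line proof: the $\theta$-free factor $e^{-\eta c(x)}$ pulls out of the integral defining $Z_c(x)$ and cancels in the normalization over $\theta$, so $q_c=q$ while $Z_c(x)=e^{-\eta c(x)}Z(x)$. Your handling of the condition $0<Z(x)<\infty$ and the concrete witness $c\equiv 1$ for non-identification make explicit what the paper leaves implicit.
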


\begin{proof}
The factor $\exp\{-\eta c(x)\}$ cancels in normalization over $\theta$ but remains in the integral defining $Z$.
\end{proof}

\begin{example}[Empirical likelihood/exponential tilting conventions leave $q$ unchanged but rescale $Z$.]\label{ex:gmm_centering}
Quasi-posteriors are often built by exponentiating empirical-likelihood-type criteria that are only defined up to
data-only constants; see, e.g., \citet{owen2001} and Bayesian variants in
\citet{lazar2003,schennach2005}. Let $g(\theta;x_i)\in\mathbb{R}^k$ be moment conditions and let $\{p_i(\theta)\}_{i=1}^n$
denote the empirical likelihood (EL) weights solving
\[
\max_{p_1,\ldots,p_n}\ \sum_{i=1}^n \log p_i
\quad \text{s.t}\quad \sum_{i=1}^n p_i=1,\ \sum_{i=1}^n p_i g(\theta;x_i)=0,\ p_i\ge 0.
\]
A standard criterion is the (profile) empirical likelihood ratio \citep{owen2001}
\[
\log R(\theta;x)=\sum_{i=1}^n \log\{n\,p_i(\theta)\},
\]
and a common quasi-posterior uses the loss $L(\theta;x)=-\log R(\theta;x)$ (or a scaled version such as $-2\log R$).
Another equally standard convention is to work with the un-ratioed EL objective $-\sum_{i=1}^n \log p_i(\theta)$,
which differs from $L(\theta;x)$ only by the data-only constant $n\log n$:
\[
-\sum_{i=1}^n \log p_i(\theta)=L(\theta;x)+n\log n.
\]

Similarly, exponential tilting (ET) chooses weights $\{w_i(\theta)\}_{i=1}^n$ by minimizing a KL divergence from
the uniform weights subject to the same moment constraints \citep{kitamura_stutzer1997,schennach2005}:
\[
\min_{w_1,\ldots,w_n}\ \sum_{i=1}^n w_i \log\!\left(\frac{w_i}{1/n}\right)
\quad \text{s.t}\quad \sum_{i=1}^n w_i=1,\ \sum_{i=1}^n w_i g(\theta;x_i)=0,\ w_i\ge 0.
\]
Because $\sum_i w_i=1$, the ET objective satisfies
\[
\sum_{i=1}^n w_i \log\!\left(\frac{w_i}{1/n}\right)=\sum_{i=1}^n w_i \log w_i+\log n,
\]
so alternative but equivalent ET conventions differ by the data-only constant $\log n$.

Define a generalized posterior $q(\theta|x)\propto \exp\{-\eta L(\theta;x)\}\,\pi(\theta)$ with normalizing
constant $Z(x)=\int \exp\{-\eta L(\theta;x)\}\pi(\theta)\,d\theta$. If one switches between any two such EL/ET
conventions (or, more generally, centers the criterion for numerical stability), replacing $L$ by
$L^{c}(\theta;x)=L(\theta;x)-c(x)$ for any data-only constant $c(x)$, then $q^{c}(\cdot|x)=q(\cdot|x)$ but
\[
Z^{c}(x)=\int \exp\{-\eta L^{c}(\theta;x)\}\pi(\theta)\,d\theta
= e^{\eta c(x)} Z(x).
\]
(This is Proposition~4 with the identification $c_{\text{Prop.\,4}}(x)=-c(x)$.)
Thus Bayes-factor-like ratios based on generalized marginal likelihood are non-canonical: for two conventions
indexed by $m=1,2$,
\[
\frac{Z_1^{c}(x)}{Z_2^{c}(x)}
=\exp\{\eta(c_1(x)-c_2(x))\}\frac{Z_1(x)}{Z_2(x)},
\]
even though the decision posteriors $q_1(\cdot|x)$ and $q_2(\cdot|x)$ are unchanged.
\end{example}

The same non-canonicity arises from loss scaling.  If $\ell'(\theta;x)=a\,\ell(\theta;x)$ with $a>0$ and
$\eta'=\eta/a$, then the generalized posterior mapping is unchanged since
$\exp\{-\eta'\ell'(\theta;x)\}=\exp\{-\eta\ell(\theta;x)\}$, hence $q'(\cdot|x)=q(\cdot|x)$.
But the normalizing constant depends on the absolute scale of $(\eta,\ell)$ and therefore cannot be
interpreted as intrinsic belief evidence unless the loss scale (equivalently, the temperature) is fixed
by additional structure or external calibration.

\begin{corollary}[Generalized Bayes factors are not well-defined evidence]\label{cor:gbf-not-well-defined}
Consider two loss-based updates indexed by $j\in\{0,1\}$.
Suppose one defines $BF_{10}(x)=Z_1(x)/Z_0(x)$.  Then for any strictly
positive $g(x)$ there exist data-only shifts such that
posteriors are unchanged, $q'_j(\cdot| x)=q_j(\cdot| x)$, but $BF'_{10}(x)=g(x) BF_{10}(x)$.
Outside the belief-posterior regime, $BF_{10}(x)$ is not an evidential quantity determined by the updating rule.
\end{corollary}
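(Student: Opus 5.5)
The plan is to obtain the corollary as a direct application of Proposition~\ref{prop:Z-not-identified}, applied separately to the two losses $\ell_0$ and $\ell_1$ underlying the updates $j\in\{0,1\}$. Fix the learning rates $\eta_0,\eta_1>0$ and baselines $\pi_0,\pi_1$, and an arbitrary strictly positive measurable $g(x)$. We need data-only shifts $c_0(x),c_1(x)$ such that the shifted normalizing constants satisfy $Z'_1(x)/Z'_0(x)=g(x)\,Z_1(x)/Z_0(x)$, while both posteriors are left unchanged.

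The construction is explicit. Leave update $0$ alone by setting $c_0\equiv 0$, and shift update $1$ by
\[
c_1(x)\;=\;-\frac{1}{\eta_1}\log g(x).
\]
This $c_1$ is well defined and measurable because $g>0$ is measurable. Proposition~\ref{prop:Z-not-identified} then gives two facts. First, $q'_1(\cdot\mid x)=q_1(\cdot\mid x)$ and $q'_0=q_0$ for all $x$. Second, $Z'_1(x)=e^{-\eta_1 c_1(x)}Z_1(x)=g(x)Z_1(x)$ and $Z'_0=Z_0$. Dividing, we get $BF'_{10}(x)=g(x)\,BF_{10}(x)$. If symmetry is preferred, one can instead split the factor, for example $c_1=-\tfrac{1}{2\eta_1}\log g$ and $c_0=\tfrac{1}{2\eta_0}\log g$, but this is unnecessary.

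The final sentence of the corollary follows by interpreting the result through the canonicality criterion of Section~\ref{subsec:ml-belief-vs-decision}. The pairs $(\ell_j,\ell_j+c_j)$ induce identical decision posteriors, yet the ratio $BF_{10}$ can be made equal to any prescribed positive function times its original value. In particular, it can be made to exceed or fall below $1$ at any $x$. Hence $BF_{10}$ is not invariant under loss-equivalent transformations and is not a function of the pair of updating rules $x\mapsto(q_0(\cdot\mid x),q_1(\cdot\mid x))$. When update $j$ is a belief posterior, Theorem~\ref{thm:bayes-iff} and Corollary~\ref{cor:bayes-diagnostic} show that the additional structure (normalization in $x$) pins $c(x)$ down. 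Outside that regime, no such structure is given.

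The only step needing care is checking that the shifts do not break finiteness. We need $Z_j(x)\in(0,\infty)$ for the ratio to be defined, and multiplying by $g(x)\in(0,\infty)$ preserves this. A second point is that ``strictly positive $g$'' should be read as finite-valued and measurable, so that $\log g$ is a legitimate data-only term. I expect this bookkeeping, rather than any real obstacle, to be the main thing to state explicitly.
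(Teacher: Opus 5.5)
Your construction ($c_0\equiv 0$ and $c_1=-\eta_1^{-1}\log g$, fed through Proposition~\ref{prop:Z-not-identified}) is correct. It is essentially the argument the paper intends: the paper gives no separate proof, and the corollary is the ratio identity $Z_1^{c}/Z_2^{c}=e^{\eta(c_1-c_2)}Z_1/Z_2$ of Example~\ref{ex:gmm_centering} with $c_1-c_2$ chosen to produce the factor $g$.

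One caveat on your closing aside, which the corollary does not need. Normalization in $x$ pins down $c(x)$ in the belief regime only under a completeness-type condition on $\{e^{-\eta\ell(\theta;\cdot)}\}_{\theta\in\Theta}$; it fails, for example, when $\Theta$ is a singleton. The contrast is therefore better stated as the sampling model itself being supplied as additional structure.
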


Corollary~\ref{cor:gbf-not-well-defined} rules out treating the Gibbs normalizing constant as \emph{canonical belief evidence intrinsic to the
decision posterior mapping}, since it varies under conventions (e.g., data-only shifts) that leave $q(\cdot|x)$
unchanged.  This does not preclude using $Z_m(x)$ as a model-dependent normalizer once a particular loss scale and
convention are fixed; it says that Bayes-factor-style evidence claims require additional identifying structure beyond
the generalized posterior alone.

The following is a simple example on how  evidence claims can flip without changing the posterior.
Consider two analysts who implement the same generalized posterior mapping $q_m(\cdot|x)$ for each model $m$
but use different numerically equivalent stabilizations, i.e., losses that differ by data-only constants $c_m(x)$.
Their reported ``generalized Bayes factor'' ratios based on $Z_m(x)$ then differ by the multiplicative factor
$\exp\{\eta(c_1(x)-c_2(x))\}$ (Example~\ref{ex:gmm_centering}), which can be enormous at realistic sample sizes.
Thus, two implementations can yield identical posteriors and predictions while reporting incompatible evidence conclusions,
underscoring that $Z_m(x)$ is not canonical belief evidence outside the log-loss regime.

A natural response to Proposition~\ref{prop:Z-not-identified} is to select a canonical representative from the shift-equivalence class
$\ell(\theta;x)\sim \ell(\theta;x)+c(x)$ by \emph{anchoring} the loss, e.g.,
\[
\tilde \ell(\theta;x)=\ell(\theta;x)-\inf_{\vartheta\in\Theta}\ell(\vartheta;x)\ge 0,
\qquad
\tilde Z(x)=\int \exp\{-\eta \tilde \ell(\theta;x)\}\,\pi(d\theta)\in(0,1].
\]
Anchoring removes the data-only shift ambiguity: if $\ell'(\theta;x)=\ell(\theta;x)+c(x)$ then $\tilde \ell'=\tilde \ell$ and hence
$\tilde Z'(x)=\tilde Z(x)$. Moreover, by Lemma~1,
\[
-\frac1\eta\log \tilde Z(x)
=\min_{q\in\Delta(\Theta)}\left\{\int \tilde \ell(\theta;x)\,q(d\theta)+\frac1\eta D_{\mathrm{KL}}(q\|\pi)\right\},
\]
so $-\frac1\eta\log \tilde Z(x)$ admits a finite-sample ``complexity/regret'' interpretation relative to the in-sample optimum.

However, anchoring does \emph{not} restore a Bayes-factor analogue for model comparison.  Because the anchor is model-specific,
$\tilde \ell_m$ removes the in-sample fit term $\inf_{\theta}\ell_m(\theta;x)$, so $\tilde Z_m(x)$ measures concentration/complexity
\emph{around} the best-fitting action within model $m$, not the absolute quality of that best fit across models.
Thus any ``anchored Bayes factor'' $\tilde Z_1(x)/\tilde Z_2(x)$ compares curvature/regularization tradeoffs rather than evidence in the
belief-posterior sense.  Model comparison in the decision-posterior regime must instead be aligned with the chosen loss/scoring
rule through predictive assessment.
Any shift-invariant functional of the generalized posterior mapping must factor out data-only terms, and therefore cannot, by itself, retain the cross-model fit component required for Bayes-factor-style evidence.

\subsection{A constructive alternative: shift-invariant predictive scoring}\label{sec:predscore}

Corollary~\ref{cor:gbf-not-well-defined} rules out treating the Gibbs normalizing constant as canonical
belief evidence outside the log-loss (belief-posterior) regime.  A coherent alternative is to compare
procedures by \emph{predictive performance} using a proper scoring rule applied to the predictive
distribution induced by the update.

Let $S(P,y)$ be a (strictly) proper scoring rule for outcomes $y\in\mathcal Y$ and predictive distributions
$P$ on $\mathcal Y$.  To keep notation consistent with the rest of the paper, write $x_t$ for the observed
data at time $t$ and let $y_t=y(x_t)$ denote the component being predicted (possibly $y_t=x_t$).  For
model/procedure $m$, let $q_{m,t}(\cdot)=q_m(\cdot|x_{1:t-1})$ denote the decision posterior after
observing $x_{1:t-1}$, and define the induced one-step-ahead predictive distribution
\begin{equation}\label{eq:pred_induced}
P_{m,t}(\,\cdot|x_{1:t-1})
=
\int p_m(\,\cdot|\theta)\,q_{m,t}(d\theta),
\end{equation}
where $\{p_m(\cdot|\theta):\theta\in\Theta_m\}$ is a predictive family indexed by $\theta$.
(Equivalently, $\theta$ may index \emph{actions} that themselves correspond to predictive distributions; no
joint sampling model for $x_{1:n}$ is required.)  The prequential predictive score is
\begin{equation}\label{eq:prequential_score}
\mathrm{Score}_m(x_{1:n})
=
\sum_{t=1}^n S\!\left(P_{m,t}(\cdot|x_{1:t-1}),\, y_t\right),
\end{equation}
with the analogous held-out score obtained by restricting the sum to a test set.

\begin{proposition}[Shift invariance of predictive scoring]\label{prop:score_shift_invariant}
Suppose the update rule for $q_m(\cdot|x)$ is loss-based and invariant to data-only shifts: for any
measurable $c_m:\mathcal X\to\mathbb R$, replacing $\ell_m(\theta;x)$ by
$\ell_m'(\theta;x)=\ell_m(\theta;x)+c_m(x)$ leaves $q_m(\cdot|x)$ unchanged (hence also $q_{m,t}$ for each $t$).
Then, the induced predictive distributions \eqref{eq:pred_induced} and the predictive score
$\mathrm{Score}_m(x_{1:n})$ in \eqref{eq:prequential_score} are invariant to such shifts.  Consequently,
predictive-score model comparisons are canonical with respect to the shift equivalence
$\ell_m\sim \ell_m+c_m(x)$, unlike comparisons based on $Z_m(x)$.
\end{proposition}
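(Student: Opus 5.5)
The argument is a direct composition of invariances, carried out in three steps.

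First, fix $m$ and a shift $c_m$. By hypothesis, $q'_m(\cdot|x)=q_m(\cdot|x)$ for every data set $x$. For the Gibbs family \eqref{eq:gibbs} this is exactly Proposition~\ref{prop:Z-not-identified}. For a sequential rule I apply the hypothesis to each prefix $x_{1:t-1}$, $t=1,\dots,n$, which gives $q'_{m,t}=q_{m,t}$. One point needs care here. If the shifted loss is cumulative, $L'_{t-1}(\theta)=L_{t-1}(\theta)+\sum_{i<t}c_m(x_i)$, then the added term is still free of $\theta$. So it is again a data-only shift for the prefix and is covered by the hypothesis (or cancels directly in the normalization over $\theta$ in the Gibbs case). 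The same holds when updating is done through the operator $\Psi$ and sequential coherence: each step is invariant, so the composition is invariant.

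Second, the predictive family $\{p_m(\cdot|\theta)\}$ in \eqref{eq:pred_induced} is specified separately from the loss and does not involve $c_m$. Hence
\[
P'_{m,t}(\cdot|x_{1:t-1})=\int p_m(\cdot|\theta)\,q'_{m,t}(d\theta)=\int p_m(\cdot|\theta)\,q_{m,t}(d\theta)=P_{m,t}(\cdot|x_{1:t-1}),
\]
as equality of mixing measures gives equality of mixtures (measurability of $\theta\mapsto p_m(A|\theta)$ is assumed so the integral is defined). Third, each summand $S(P_{m,t},y_t)$ depends only on $P_{m,t}$ and the observed $y_t$, and neither of these changes. So $\mathrm{Score}_m$ is unchanged, and the same holds for the held-out version, which is a sub-sum.

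For the canonicality conclusion, a comparison between $m=1,2$ is any function of $(\mathrm{Score}_1,\mathrm{Score}_2)$. Such a comparison is therefore invariant under arbitrary independent shifts $(c_1,c_2)$. I will contrast this with Corollary~\ref{cor:gbf-not-well-defined}, where the ratio $Z_1/Z_2$ is multiplied by $e^{-\eta(c_1-c_2)}$.

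I expect no real obstacle. The only delicate point is the first step: making precise that the shift for the prefix data is still data-only in the sequential setting, so that the hypothesis applies at every $t$. Propriety of $S$ is not needed for invariance; it matters only for interpreting the score.
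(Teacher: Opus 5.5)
Your proposal is correct and matches the paper's proof: shift invariance gives $q'_{m,t}=q_{m,t}$, equal mixing measures give $P'_{m,t}=P_{m,t}$, and so every summand of the score is unchanged. Your additional points are sound but are not needed by the paper, which leaves them implicit: the check that a cumulative prefix shift $\sum_{i<t}c_m(x_i)$ is still data-only, and the explicit contrast with the factor $e^{-\eta(c_1-c_2)}$ on $Z_1/Z_2$.
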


\begin{proof}
Shift invariance implies $q_{m,t}'=q_{m,t}$ for each $t$.  Therefore
\[
P_{m,t}'(\cdot|x_{1:t-1})
=
\int p_m(\cdot|\theta)\,q_{m,t}'(d\theta)
=
\int p_m(\cdot|\theta)\,q_{m,t}(d\theta)
=
P_{m,t}(\cdot|x_{1:t-1}),
\]
and each summand in \eqref{eq:prequential_score} is unchanged.  Summing over $t$ yields
$\mathrm{Score}_m'(x_{1:n})=\mathrm{Score}_m(x_{1:n})$.
\end{proof}

Taking $S(P,y)=-\log p(y)$ yields the cumulative log predictive density
\[
\mathrm{LPD}_m(x_{1:n})
=
\sum_{t=1}^n \log p_{m,t}(y_t|x_{1:t-1}),
\qquad
p_{m,t}(y|x_{1:t-1})=\frac{d}{dy}P_{m,t}(y|x_{1:t-1}),
\]
and model comparisons via $\Delta\mathrm{LPD}_{10}=\mathrm{LPD}_1-\mathrm{LPD}_0$ inherit the shift invariance in
Proposition~\ref{prop:score_shift_invariant}.  For continuous outcomes one may instead use proper scores, such
as continuous ranked probability score (CRPS), to obtain robustness and interpretability on the outcome scale.

\subsection{When does the problem disappear?}\label{subsec:ml-when-ok}

\paragraph{Log-loss/belief posterior case.}
If the loss is negative log-likelihood,
Theorem~\ref{thm:bayes-iff} ensures an ordinary Bayesian posterior
under a normalized sampling model, and Bayes factors regain their standard interpretation.

\paragraph{Decision posterior case.}
Outside log-loss, the natural replacement for belief evidence is
decision-relevant scoring of predictive performance-- prequential or cross-validation criteria coherent for the chosen loss or scoring
rule.  These approaches make explicit
that evidence is defined relative to the decision criterion, rather than inherited
from a generative likelihood.

\section{Expected utility over mixed decision rules implies degeneracy}\label{sec:vnm}

A tempting interpretation of loss-based posteriors is that they represent optimal
randomization.  This
section records a standard obstacle.  If one applies the classical
von~Neumann--Morgenstern expected-utility axioms directly to mixtures of decision rules, then
optimizing over randomized rules collapses to selecting (essentially) a single action.  The
implication: any non-degenerate loss-based posterior must be justified
by nonlinear preferences over decision rules.

\subsection{Preferences over randomized decision rules}\label{subsec:vnm-setup}

Fix observed data $x$ and an action space $\Theta$. A
(randomized) decision rule is a probability measure $q\in\Delta(\Theta)$.  For $q,r\in\Delta(\Theta)$
and $\alpha\in(0,1)$, write $\alpha q + (1-\alpha)r$ for the mixture.  Let $\succeq_x$ denote the decision maker's preference relation over
$\Delta(\Theta)$ after observing $x$.

\paragraph{Axioms (vNM over decision rules).}
We say that $\succeq_x$ satisfies the vNM axioms on $\Delta(\Theta)$ if it is
(i) complete and transitive (weak order);
(ii) continuous in mixtures; and
(iii) satisfies independence: for all $q,r,s\in\Delta(\Theta)$ and all $\alpha\in(0,1)$,
$q \succeq_x r \Leftrightarrow \alpha q+(1-\alpha)s \succeq_x \alpha r+(1-\alpha)s$
\citep{vonNeumannMorgenstern1944TheoryGames}.

\subsection{Degeneracy under vNM independence}\label{subsec:vnm-degeneracy}

\begin{proposition}[vNM over mixtures of decision rules implies degeneracy]\label{prop:vnm-degenerate}
Fix $x$. Suppose $\succeq_x$ satisfies the vNM axioms on $\Delta(\Theta)$.
Then, there exists a measurable function $u_x:\Theta\to\mathbb{R}$, such that preferences admit the expected-utility representation
$U_x(q) = \int_{\Theta} u_x(\theta) q(d\theta)$,
$q \succeq_x r \Leftrightarrow U_x(q)\ge U_x(r)$.
Moreover, if $q^{*}_x\in\arg\max_{q\in\Delta(\Theta)} U_x(q)$ exists, then every maximizer is
supported on
$\Theta_x^{*} = \arg\max_{\theta\in\Theta} u_x(\theta)$,
and in particular there exists an optimal degenerate rule $\delta_{\theta_x^{*}}$. If $\theta_x^{*}$ is unique, the unique optimal
rule is the point mass.
\end{proposition}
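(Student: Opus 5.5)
The proof has two parts: first obtain a linear representation on $\Delta(\Theta)$ by the mixture-space (Herstein--Milnor) theorem, then use linearity to show that optimal rules are supported on maximizers of $u_x$.

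\textbf{Step 1 (representation).} $\Delta(\Theta)$ with the operation $(q,r,\alpha)\mapsto \alpha q+(1-\alpha)r$ is a mixture space. Weak order, mixture continuity and independence are exactly the Herstein--Milnor hypotheses, so there is a mixture-linear $U_x:\Delta(\Theta)\to\mathbb R$, unique up to positive affine transformation, with $q\succeq_x r\Leftrightarrow U_x(q)\ge U_x(r)$. Define $u_x(\theta)=U_x(\delta_\theta)$. Mixture linearity gives $U_x(q)=\int u_x\,dq$ for finitely supported $q$. Extending this to all $q\in\Delta(\Theta)$ is the step I expect to be the main obstacle. The vNM axioms alone give only finite additivity, so the integral representation for general measures needs a standard regularity supplement: either boundedness and measurability of $u_x$ together with a dominance or sure-thing condition, or continuity of $U_x$ in the weak topology combined with density of finitely supported measures. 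I would state this as the regularity implicit in ``measurable $u_x$'' and cite the standard extension (Fishburn's treatment of countable additivity and dominance). Once it is in place, $U_x(q)=\int u_x\,dq$ holds on all of $\Delta(\Theta)$.

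\textbf{Step 2 (support of maximizers).} Suppose a maximizer $q^*$ exists and set $M=U_x(q^*)$. For every $\theta$, $u_x(\theta)=U_x(\delta_\theta)\le M$, so $\sup_\theta u_x\le M$. Conversely, $M=\int u_x\,dq^*\le\sup_\theta u_x$, hence $M=\sup_\theta u_x$. Then $\int (M-u_x)\,dq^*=0$ with a nonnegative integrand, so $q^*(\{u_x<M\})=0$, i.e.\ $q^*$ is concentrated on $\{u_x=M\}$.

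\textbf{Step 3 (attainment and uniqueness).} Because $q^*$ is a probability measure concentrated on $\{u_x=M\}$, this set is nonempty. So the supremum is attained, $\Theta_x^*=\arg\max u_x=\{u_x=M\}$, and any $\theta_x^*\in\Theta_x^*$ gives $U_x(\delta_{\theta_x^*})=M$; thus the degenerate rule $\delta_{\theta_x^*}$ is optimal. If $\Theta_x^*=\{\theta_x^*\}$ is a singleton and measurable, every maximizer satisfies $q^*(\{\theta_x^*\})=1$, so $q^*=\delta_{\theta_x^*}$ is the unique optimal rule.
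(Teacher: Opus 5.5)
The paper gives no proof of this proposition (it presents it as a ``standard obstacle''), so there is no argument to match yours against. Judged on its own, your Steps~2 and~3 are correct and are the natural argument. One small simplification: the singleton $\{\theta_x^*\}=u_x^{-1}(\{M\})$ is automatically measurable because $u_x$ is, so you do not need to assume it.

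The genuine gap is Step~1, and no citation can close it: without extra structure the statement is false. The vNM axioms on all of $\Delta(\Theta)$ imply neither the integral representation nor the existence of an optimal degenerate rule. Take $\Theta=[0,1]$ with its Borel sets, and let $U(q)=1-\sum_{\theta}q(\{\theta\})$, the mass of the diffuse part of $q$. Mixing commutes with the atomic/diffuse decomposition, so $U$ is real-valued and mixture-linear. Hence $q\succeq_x r\Leftrightarrow U(q)\ge U(r)$ satisfies weak order, mixture continuity and independence. But $U(\delta_\theta)=0$ for every $\theta$, while Lebesgue measure has $U=1$. The optimal rules are therefore exactly the diffuse measures, and no point mass is optimal. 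Moreover, any $u_x$ with $\int u_x\,dq$ representing $\succeq_x$ would have to be constant, since all Dirac masses are indifferent, and that would make every rule indifferent.

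So the supplement cannot be read as ``the regularity implicit in measurable $u_x$'': measurability and the integral form are part of what the proposition asserts, not something it grants. You must add a hypothesis explicitly. Any of the following works:
\begin{itemize}
\item $\Theta$ finite, since every rule is then a finite mixture of Dirac masses;
\item $\Theta$ separable metric with $U_x$ weakly continuous, which forces $u_x=U_x(\delta_{\cdot})$ to be bounded and continuous, after which weak density of finitely supported measures gives $U_x(q)=\int u_x\,dq$;
\item a Fishburn-type dominance axiom.
\end{itemize}
With any of these, your Steps~2--3 complete the proof.

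Some of the message survives with no regularity at all. If $q^*$ is the \emph{unique} maximizer of a mixture-linear $U_x$ and $q^*=\tfrac12 q_1+\tfrac12 q_2$, then $U_x(q_1)=U_x(q_2)=U_x(q^*)$, so $q_1=q_2=q^*$. Thus $q^*$ is an extreme point of $\Delta(\Theta)$, i.e.\ a $\{0,1\}$-valued measure, which is a Dirac mass on standard Borel spaces. This is enough to rule out a non-degenerate rule being the unique vNM optimum.
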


Proposition~\ref{prop:vnm-degenerate} makes precise a basic point: under vNM independence applied to
mixtures of decision rules, randomization is behaviorally irrelevant except as tie-breaking
among equally best pure actions.

\begin{corollary}[Non-degenerate loss-based posteriors require nonlinear preferences]\label{cor:nonlinear}
Consider any loss-based update that outputs, for typical data $x$, a non-degenerate distribution
$q(\cdot| x)$ whose support is not contained in an argmax set of a single utility function
$u_x(\theta)$.  Then, $q(\cdot| x)$ cannot be obtained as the unique optimizer of a vNM
expected-utility preference over $\Delta(\Theta)$.
\end{corollary}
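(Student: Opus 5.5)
Corollary~\ref{cor:nonlinear} follows from Proposition~\ref{prop:vnm-degenerate} by contradiction. Suppose $q(\cdot|x)$ is the unique optimizer of some preference $\succeq_x$ on $\Delta(\Theta)$ that satisfies the vNM axioms. Proposition~\ref{prop:vnm-degenerate} then gives a measurable $u_x$ such that $U_x(q)=\int u_x\,dq$ represents $\succeq_x$. It also says that every maximizer is supported on $\Theta^*_x=\arg\max_\theta u_x(\theta)$. Hence $q(\cdot|x)$ is supported on $\Theta^*_x$. This contradicts the hypothesis that its support is not contained in the argmax set of any single utility function. This direction uses only the support statement of the proposition.

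We can also use uniqueness to get something sharper. If $q(\cdot|x)$ were supported on $\Theta^*_x$ and $\Theta^*_x$ had at least two points, then every $r$ supported on $\Theta^*_x$ would satisfy $U_x(r)=\max u_x = U_x(q)$. In particular some point mass $\delta_{\theta}$ with $\theta\in\Theta^*_x$ would also be optimal. So uniqueness fails unless $q$ is itself that point mass. A non-degenerate $q$ therefore cannot be the unique vNM optimizer even when its support is contained in an argmax set. I will record this as a remark, since it reinforces the message of the corollary.

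The one step that needs care is the meaning of ``support'' for general measurable $\Theta$. The statement ``maximizers are supported on $\Theta^*_x$'' should be read as $q(\Theta^*_x)=1$. I will justify this directly. If $q(\{u_x<\sup u_x\})>0$, then $\int u_x\,dq<\sup u_x$, and if the argmax is attained, a point mass on it strictly improves. The hypothesis of the corollary should be interpreted in the same sense, namely that there is no $u_x$ with $q(\arg\max u_x)=1$. With that reading the contradiction is immediate. The only remaining point is to observe that uniqueness entails existence of a maximizer, so Proposition~\ref{prop:vnm-degenerate} applies.
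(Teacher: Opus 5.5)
Your argument is correct and follows the route the paper intends. The paper gives no separate proof of this corollary; it treats it as an immediate consequence of the support claim in Proposition~\ref{prop:vnm-degenerate}, and your uniqueness remark mirrors that proposition's clause guaranteeing an optimal point mass. Under your reading, however, the hypothesis is satisfied by no $q$, since a constant $u_x$ has $\arg\max u_x=\Theta$, so the first paragraph proves a vacuous statement; your second paragraph, that a non-degenerate $q$ is never the unique optimizer of any vNM preference, is the substantive content rather than a side remark.
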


Loss-based posteriors become decision-theoretically natural once one allows nonlinear
preferences over randomized rules, for example, by penalizing deviations from a baseline rule,
$\pi$.  Such preferences introduce curvature, making interior solutions optimal and turning
randomization into a meaningful regularization device. 

\section{A coherence book and implications}\label{sec:book}

This section translates the characterizations into a practical
``coherence book.''  When a procedure
yields a belief posterior (Theorem~\ref{thm:bayes-iff}), it inherits the familiar Bayesian
coherence guarantees.  When it yields a decision
posterior (Corollary~\ref{cor:gb}), it is coherent as an optimal randomized decision
rule, but several belief-based corollaries are no longer automatic.
Table~\ref{tab:book} summarizes the key distinctions.

\begin{table}[t]
\centering
\caption{Coherence book: belief posteriors versus decision posteriors}
\label{tab:book}
\small
\setlength{\tabcolsep}{4pt}
\renewcommand{\arraystretch}{1.15}
\begin{tabularx}{\textwidth}{@{}>{\raggedright\arraybackslash}p{0.20\textwidth}
>{\raggedright\arraybackslash}X
>{\raggedright\arraybackslash}X
>{\raggedright\arraybackslash}p{0.22\textwidth}@{}}
\toprule
Property/claim & Belief posterior & Decision posterior & Replacement \\
\midrule
Interpretation of probabilities &
Conditional beliefs &
Weights in randomized decision rule &
Decision-theoretic optimality \\
Existence of a joint model &
Yes (by construction) &
No (only in the log-loss case) &
Loss as primitive \\
Dynamic consistency  &
Conditional choice consistency &
Sequential coherence &
Operator coherence \\
Exchangeability (de Finetti) &
Belief constraint on sequences &
Not implied &
Permutation invariance \\
Likelihood principle &
Available within a specified model &
Not generally applicable &
Loss principle \\
Marginal likelihood/Bayes factor &
Well-defined, model-based &
No canonical analogue in general &
Decision-based model comparison \\
Scale identification &
Likelihood fixes scale &
Scale free &
Calibration/robustness choice for $\eta$ \\
\bottomrule
\end{tabularx}
\end{table}

\subsection{Key implications}\label{subsec:decisions}

The book can be summarized: belief posteriors support belief-based decision
analysis; decision posteriors support rule-based decision analysis.

First, ``prior'' means ``baseline rule'' unless log-loss holds.
When Theorem~\ref{thm:bayes-iff} fails, $\pi$ should not be interpreted as a prior belief.  It is a baseline or reference rule that anchors regularization.  Arguments that rely on prior beliefs-- exchangeability or subjective probability calibration-- do not automatically transfer.  What does transfer is the interpretation of $\pi$ as a complexity control, default action, or initial commitment.

Second, dynamic consistency is replaced, not preserved.
In belief-based Bayes, dynamic consistency links conditional and unconditional preferences
over acts; the updating rule is derived from plans remaining optimal after
conditioning \citep{Ghirardato2002RevisitingSavage}.  In loss-based updating, the coherence notion
is operator coherence: updating sequentially with additive losses
yields the same decision posterior as updating once with the combined loss \citep{BissiriHolmesWalker2016GeneralFramework}.  These are different requirements.

Third, exchangeability is not inherited from order invariance.
Permutation invariance of additive loss does
not imply exchangeability as a belief constraint \citep{deFinetti1937LaPrevision}, because there need not
exist an underlying joint probability model.  Permutation invariance of a criterion is not a substitute for
probabilistic exchangeability.

Fourth, evidential measures become task-dependent.
Belief posteriors come with canonical evidential summaries (marginal likelihood, Bayes factors)
tied to the joint model.  Decision posteriors do not yield canonical model evidence; comparisons
must be formulated in the same decision language that defines the update.

In terms of reporting standards for generalized posteriors, we recommend the following:
to avoid importing belief-based claims into the decision-posterior regime, applied use of loss-based updating should report:
(i) the loss $\ell(\theta;x)$ and its scale/units;
(ii) the learning rate/temperature $\eta$ and the procedure used to select it;
(iii) whether the output is interpreted as a belief posterior (and if so, the implied likelihood model per Theorem~1) or as a decision posterior;
(iv) the intended independence/separability structure (if any) motivating Axiom~A4;
(v) the model-comparison criterion used (preferably predictive scoring aligned with the loss/scoring rule, rather than $Z(x)$);
and (vi) any implementation conventions (centering/normalization) that leave $q(\cdot|x)$ unchanged but affect $Z(x)$.

\subsection{Learning-rate calibration}\label{subsec:eta}

Perhaps the most visible practical consequence is the
status of the learning rate $\eta$.  In ordinary Bayes, the scale of the log-likelihood
is fixed by the probabilistic model, so the natural temperature is one.  In loss-based updating,
$\eta$ is part of the preference specification: it controls concentration and must be
calibrated.

Several principled calibration strategies have been proposed.  The loss-likelihood bootstrap
\citep{LyddonHolmesWalker2019LossLikelihoodBootstrap} selects $\eta$ so that
posterior uncertainty reflects sampling variability of a loss-based estimator.  Other approaches
seek safe learning rates under misspecification \citep{GrunwaldVanOmmen2017InconsistencyMisspecifiedLinearModels}.  The decision-theoretic view makes clear why calibration is central:
it replaces the probabilistic scale identification that is automatic under belief
posteriors.

\subsection{A short checklist}\label{subsec:checklist}

\begin{enumerate}[(C1)]
\item \emph{Interpretation:} Are you claiming belief semantics, or is the posterior a decision rule?
If belief semantics are claimed, Theorem~\ref{thm:bayes-iff} identifies the required loss structure.
\item \emph{Loss justification:} Why does the chosen loss represent the inferential target or
decision problem at hand?
\item \emph{Baseline justification:} What does $\pi$ represent, and how sensitive are conclusions to this choice?
\item \emph{Calibration:} How is the learning rate chosen, and what property is it designed to
deliver (coverage, robustness, predictive safety)?
\end{enumerate}

Loss-based updating is coherent, but its coherence is that of decision posteriors unless the
loss is log-likelihood.  The characterization results of Section~\ref{sec:characterize} make this
distinction sharp; the book translates it into operational guidance.

\section{Discussion}\label{sec:discussion}

Generalized Bayes and related loss-based updates have become standard tools.  Their
success is empirical and methodological: they often deliver stable procedures in settings where a
fully specified likelihood is unavailable, unreliable, or misaligned with the inferential target.
The goal of this paper has been to make explicit the boundary between belief posteriors and decision posteriors.

Generalized Bayes is best understood as a coherent mechanism for selecting randomized decision rules under an explicit
loss and baseline, with Bayesian belief updating recovered only in the log-loss case.  Making this
distinction explicit strengthens the methodological foundations of loss-based inference.

\bibliographystyle{abbrvnat}
\bibliography{reference}

\newpage
\setcounter{equation}{0}
\setcounter{section}{0}
\setcounter{table}{0}
\setcounter{page}{1}
\renewcommand{\thesection}{S\arabic{section}}
\renewcommand{\theequation}{S\arabic{equation}}
\renewcommand{\thetable}{S\arabic{table}}

\vspace{1cm}
\begin{center}
{\LARGE
{\bf Supplementary Material for ``When Is Generalized Bayes Bayesian?"}
}
\end{center}

\section{Proofs and supporting results}\label{app:proofs}

This appendix collects proofs.  Throughout, $\Delta(\Theta)$
denotes probability measures on the measurable space $(\Theta,\mathcal T)$.

\subsection{Proof of Theorem~\ref{thm:bayes-iff}}\label{app:bayes-iff}

\begin{proof}
Assume (i): there exists $\{p_\theta(x)\}$ such that
$q(\theta| x)\propto \pi(\theta)p_\theta(x)$ for all $x$. Comparing with
$q(\theta| x)\propto \pi(\theta)\exp\{-\eta\ell(\theta;x)\}$ implies that for each fixed $x$,
\[
p_\theta(x) = a(x)\exp\{-\eta\ell(\theta;x)\}
\]
for some factor $a(x)>0$ not depending on $\theta$.
Taking logs gives
$\ell(\theta;x)=-(1/\eta)\log p_\theta(x)+c(x)$ with $c(x)=-(1/\eta)\log a(x)$, proving (ii).

Conversely, assume (ii). Then
$\exp\{-\eta\ell(\theta;x)\}=p_\theta(x)\exp\{-\eta c(x)\}$. The factor $\exp\{-\eta c(x)\}$ does not
depend on $\theta$ and cancels in normalization over $\theta$, so the generalized update equals the
Bayesian posterior under likelihood $p_\theta(x)$, proving (i).
\end{proof}

\subsection{From KL-regularized choice to exponential tilting}\label{app:kl-to-gibbs}

\begin{lemma}[Gibbs rule solves KL-regularized risk]\label{lem:gibbs-solves-kl}
Fix $x$, loss $\ell(\theta;x)$, baseline $\pi$, and $\eta>0$. Suppose
$\int \exp\{-\eta\ell(\theta;x)\} \pi(d\theta) < \infty$.
Then the optimization problem
\[
\inf_{q\in\Delta(\Theta)}
\left\{
\int \ell(\theta;x) q(d\theta)+\frac{1}{\eta}D_{\mathrm{KL}}(q\|\pi)
\right\}
\]
has the unique minimizer
\[
q^{*}(d\theta| x)=\frac{\exp\{-\eta\ell(\theta;x)\}\pi(d\theta)}{\int \exp\{-\eta\ell(\vartheta;x)\} \pi(d\vartheta)}.
\]
Moreover, the optimal value equals $-(1/\eta)\log\int \exp\{-\eta\ell(\theta;x)\} \pi(d\theta)$.
\end{lemma}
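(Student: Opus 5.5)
The plan is to use the Donsker--Varadhan (Gibbs variational) identity, which reduces the minimization to a single relative-entropy computation. Write $Z=\int \exp\{-\eta\ell(\theta;x)\}\pi(d\theta)\in(0,\infty)$. Positivity holds because $\exp\{-\eta\ell\}>0$ pointwise and $\pi$ is a probability measure; finiteness is the stated assumption. Then $q^{*}$ is a well-defined probability measure, and $q^{*}\sim\pi$ with density $dq^{*}/d\pi=\exp\{-\eta\ell\}/Z$.

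The first step disposes of $q$ with $q\not\ll\pi$: for these $D_{\mathrm{KL}}(q\|\pi)=+\infty$, so they cannot be minimizers, provided the objective at $q^{*}$ is finite. For $q\ll\pi$ we have $q\ll q^{*}$ as well, since the two measures are equivalent, and the chain rule gives
\[
\log\frac{dq}{d\pi}=\log\frac{dq}{dq^{*}}-\eta\ell-\log Z .
\]
Integrating against $q$ and dividing by $\eta$ yields the key identity
\[
\int \ell\,dq+\frac1\eta D_{\mathrm{KL}}(q\|\pi)=-\frac1\eta\log Z+\frac1\eta D_{\mathrm{KL}}(q\|q^{*}).
\]
Since $D_{\mathrm{KL}}(q\|q^{*})\ge 0$, with equality if and only if $q=q^{*}$ (Gibbs' inequality, i.e.\ strict convexity of $t\log t$ via Jensen), both uniqueness of the minimizer and the optimal value $-(1/\eta)\log Z$ follow at once.

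The main obstacle is making the integration step rigorous when $\int\ell\,dq$ may be infinite or undefined, because the identity can otherwise produce $\infty-\infty$. I would handle this by reading the objective as $+\infty$ whenever $D_{\mathrm{KL}}(q\|\pi)=\infty$. I would then verify that the negative part of $\ell$ is $q$-integrable whenever $D_{\mathrm{KL}}(q\|\pi)<\infty$. This follows from the Fenchel--Young inequality $ab\le e^{a}+b\log b-b$ applied with $a=\eta\ell^{-}$ and $b=dq/d\pi$, combined with $\int e^{\eta\ell^{-}}d\pi\le 1+Z$, which holds because $\exp\{\eta\ell^{-}\}\le 1+\exp\{-\eta\ell\}$. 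With that in place, $\int\ell\,dq$ is well defined in $(-\infty,\infty]$. The identity then holds in the extended sense with both sides equal to $+\infty$ simultaneously, because $-\frac1\eta\log Z$ is a finite constant.

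It remains to check that $q^{*}$ itself gives a finite objective. We have $D_{\mathrm{KL}}(q^{*}\|\pi)=\int(-\eta\ell-\log Z)\,dq^{*}$. The finiteness of the positive-part integral $\int\ell^{+}dq^{*}$ follows from $\ell^{+}e^{-\eta\ell^{+}}\le 1/(e\eta)$, and the negative part is controlled by the previous step. Therefore the objective at $q^{*}$ equals exactly $-(1/\eta)\log Z$, which completes the argument.
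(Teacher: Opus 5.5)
Your core identity, $\int\ell\,dq+\frac1\eta D_{\mathrm{KL}}(q\|\pi)=\frac1\eta D_{\mathrm{KL}}(q\|q^*)-\frac1\eta\log Z$, is the decomposition the paper uses. Your sign on $\log Z$ is the correct one; the paper's displays carry $+\frac1\eta\log Z$, which is a slip.

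The gap is in your last paragraph. You say the negative part $\int\ell^-\,dq^*$ is ``controlled by the previous step,'' but that is circular. The previous step bounds $\eta\int\ell^-\,dq$ by $Z+D_{\mathrm{KL}}(q\|\pi)$. However, $D_{\mathrm{KL}}(q^*\|\pi)=\eta\int\ell^-\,dq^*-\eta\int\ell^+\,dq^*-\log Z$. So finiteness of $D_{\mathrm{KL}}(q^*\|\pi)$ is equivalent to the very bound you want.

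The bound also genuinely fails under the sole hypothesis $Z<\infty$. Take $\Theta=\{1,2,\dots\}$, $\eta=1$, $\ell(k)=-k$ and $\pi(\{k\})\propto e^{-k}k^{-2}$. Then $Z<\infty$ and $q^*(\{k\})\propto k^{-2}$, so $\int\ell\,dq^*=-\infty$ and $D_{\mathrm{KL}}(q^*\|\pi)=+\infty$. Under your convention:
\begin{enumerate}[(a)]
\item the objective at $q^*$ is $+\infty$;
\item every $q$ with $D_{\mathrm{KL}}(q\|\pi)<\infty$ has objective $D_{\mathrm{KL}}(q\|q^*)-\log Z>-\log Z$, strictly, since such $q\neq q^*$;
\item the truncations $q_n=q^*(\cdot\mid\{1,\dots,n\})$ have objective $-\log q^*(\{1,\dots,n\})-\log Z\downarrow-\log Z$.
\end{enumerate}
So the infimum is not attained, and the lemma read your way is false.

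The cleanest repair is to adopt the convention implicit in the paper's first display. For $q\ll\pi$, define the objective as $\frac1\eta\int\log\bigl(e^{\eta\ell}\,dq/d\pi\bigr)\,dq$, and set it to $+\infty$ otherwise. The integrand equals $\log(dq/dq^*)-\log Z$, so this quantity has three useful properties:
\begin{enumerate}[(a)]
\item it is always well defined in $(-\infty,\infty]$;
\item it agrees with $\int\ell\,dq+\frac1\eta D_{\mathrm{KL}}(q\|\pi)$ whenever that sum is well defined, by your Fenchel--Young bound;
\item it equals $\frac1\eta D_{\mathrm{KL}}(q\|q^*)-\frac1\eta\log Z$ for every $q\ll\pi$.
\end{enumerate}
Gibbs' inequality then gives the statement exactly as posed, with no finiteness check at $q^*$.

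Alternatively, keep your convention and add a hypothesis such as $\int\ell^-e^{-\eta\ell}\,d\pi<\infty$, which holds automatically when $\ell$ is bounded below, as for most losses. This is literally $Z\int\ell^-\,dq^*<\infty$. Together with your bound on $\int\ell^+\,dq^*$, it gives $D_{\mathrm{KL}}(q^*\|\pi)=-\eta\int\ell\,dq^*-\log Z<\infty$, and the rest of your argument then goes through as written.
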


\begin{proof}
Let $Z=\int \exp\{-\eta\ell(\theta;x)\} \pi(d\theta)$ and define $q^{*}$ as above.
For any $q\ll\pi$, write the KL divergence as
$D_{\mathrm{KL}}(q\|\pi)=\int \log(dq/d\pi) dq$.
Consider
\[
\int \ell\,dq+\frac{1}{\eta}D_{\mathrm{KL}}(q\|\pi)
=\frac{1}{\eta}\int \log\!\left(\exp\{\eta\ell\}\frac{dq}{d\pi}\right) dq.
\]
Add and subtract $\log Z$ and use the definition of $q^{*}$:
\[
\log\!\left(\exp\{\eta\ell(\theta;x)\}\frac{dq}{d\pi}\right)
=\log\!\left(\frac{dq}{dq^{*}}\right)+\log Z.
\]
Integrating under $q$ gives
\[
\int \ell\,dq+\frac{1}{\eta}D_{\mathrm{KL}}(q\|\pi)
=\frac{1}{\eta}D_{\mathrm{KL}}(q\|q^{*})+\frac{1}{\eta}\log Z.
\]
Since $D_{\mathrm{KL}}(q\|q^{*})\ge 0$ with equality iff $q=q^{*}$ (a.s.), $q^{*}$ is the unique
minimizer.
\end{proof}

\subsection{Proofs for Section~\ref{subsec:variational}}\label{app:variational}

The argument in Section~\ref{subsec:variational} has three logically distinct components:
(i) a variational representation of preferences over randomized decision rules,
(ii) a product-additivity restriction on the deviation cost implied by separability across independent
subproblems, and (iii) identification of the unique product-additive deviation cost within the
$f$-divergence class, which singles out relative entropy (up to scale).

\paragraph{Step 1: variational representation (assumed).}
We isolate the role of the deviation penalty through
Assumption~\ref{ass:variational}, which postulates that conditional preferences over randomized rules
admit the regularized form \eqref{eq:variational-form}.  This assumption is standard in the decision-theoretic literature
\citep{MaccheroniMarinacciRustichini2006,Strzalecki2011}.

\paragraph{Step 2: separability implies product additivity.}

\begin{proof}[Proof of Proposition~\ref{prop:prodadd}]
Consider a factorized decision problem with $\Theta=\Theta_1\times\Theta_2$,
$\pi=\pi_1\otimes \pi_2$, and additive loss.
Under Assumption~\ref{ass:variational}, 
Axiom (A4) requires that comparing $q_1\otimes q_2$
and $r_1\otimes q_2$ be
independent of the specific choice of $q_2$.  Applied to the
variational criterion, this implies that the incremental deviation cost
\[
\Delta(q_1,r_1;q_2)
:=
D(q_1\otimes q_2\|\pi_1\otimes\pi_2)-D(r_1\otimes q_2\|\pi_1\otimes\pi_2)
\]
cannot depend on $q_2$.  Setting $r_1=\pi_1$ yields that
$D(q_1\otimes q_2\|\pi_1\otimes\pi_2)-D(\pi_1\otimes q_2\|\pi_1\otimes\pi_2)$ depends only on $q_1$.
By symmetry, swapping the roles of the components gives an analogous statement depending only on
$q_2$.  Combining these two invariances and using $D(\pi_1\otimes \pi_2\|\pi_1\otimes\pi_2)=0$ yields
the decomposition
\[
D(q_1\otimes q_2\|\pi_1\otimes\pi_2)
=
D(q_1\otimes \pi_2\|\pi_1\otimes\pi_2)
+
D(\pi_1\otimes q_2\|\pi_1\otimes\pi_2).
\]
Interpreting $D(\cdot\|\cdot)$ consistently across domains gives
$D(q_1\otimes \pi_2\|\pi_1\otimes\pi_2)=D(q_1\|\pi_1)$ and
$D(\pi_1\otimes q_2\|\pi_1\otimes\pi_2)=D(q_2\|\pi_2)$, hence product additivity.
\end{proof}

\paragraph{Step 3: within $f$-divergences, product additivity forces KL.}

\begin{proof}[Proof of Proposition~\ref{prop:klunique}]
Under Assumption~\ref{ass:fdiv}, for $q\ll \pi$ we can write
$D(q\|\pi)=\int \varphi(dq/d\pi)\,d\pi$ with $\varphi$ convex and $\varphi(1)=0$.
For product measures $q=q_1\otimes q_2$ and $\pi=\pi_1\otimes\pi_2$, the Radon--Nikodym derivative
factorizes.
Hence product additivity from Proposition~\ref{prop:prodadd} is equivalent to the identity
\[
\int \varphi\!\left( A(\theta_1)B(\theta_2)\right)\,(\pi_1\otimes\pi_2)(d\theta_1\,d\theta_2)
=
\int \varphi(A(\theta_1))\,\pi_1(d\theta_1)
+
\int \varphi(B(\theta_2))\,\pi_2(d\theta_2),
\]
holding for all measurable $A=dq_1/d\pi_1$ and $B=dq_2/d\pi_2$ with
$\int A\,d\pi_1=\int B\,d\pi_2=1$.

A standard functional-equation argument shows that the convex solutions are
$\varphi(t)=c\,t\log t + a(t-1)$, $c\ge 0$,
and strict convexity forces $c>0$.  The linear term integrates to zero because
$\int a(dq/d\pi-1)\,d\pi = a(\int dq - \int d\pi)=0$, so it does not
affect $D(q\|\pi)$.  Therefore $D(q\|\pi)=c\,D_{\mathrm{KL}}(q\|\pi)$, as claimed.
\end{proof}

\paragraph{Step 4: exponential tilting.}
With $D$ proportional to $D_{\mathrm{KL}}$, the optimizer of \eqref{eq:variational-form} is the Gibbs
rule by Lemma~\ref{lem:gibbs-solves-kl}, and the multiplicative constant is absorbed into $\eta$.

\begin{proof}[Proof of Corollary~\ref{cor:gb}]
Combine Proposition~\ref{prop:prodadd} and Proposition~\ref{prop:klunique} to conclude that
$D(q\|\pi)=cD_{\mathrm{KL}}(q\|\pi)$ with $c>0$.  Absorb $c$ into $\eta$ in
\eqref{eq:variational-form}.  Lemma~\ref{lem:gibbs-solves-kl} then yields the generalized Bayes form
\eqref{eq:gb-update}; uniqueness follows from strict convexity.
\end{proof}

\begin{remark}[What is proved versus assumed]\label{rem:reduction}
Assumption~\ref{ass:variational} isolates the variational structure of preferences
over randomized decision rules.  Given this
structure, Proposition~\ref{prop:prodadd} is a direct consequence of the separability axiom (A4).
Proposition~\ref{prop:klunique} provides a self-contained identification of relative entropy within
the broad class of separable $f$-divergence deviation costs; together with
Lemma~\ref{lem:gibbs-solves-kl}, this yields a compact and fully explicit derivation of the
generalized Bayes decision rule under the stated assumptions.
\end{remark}

\end{document}